\definecolor{azure}{rgb}{0.9, 0.95, 1.0}
\theoremstyle{plain}
\newtheorem{theorem}{Theorem}[section]
\newtheorem{proposition}[theorem]{Proposition}
\newtheorem{lemma}[theorem]{Lemma}
\newtheorem{corollary}[theorem]{Corollary}
\theoremstyle{definition}
\theoremstyle{remark}
\newcommand{\xc}[1]{{\color{blue} [XC: {#1}]}} 
\newcommand{\hz}[1]{{\color{green} [HZ: {#1}]}}
\definecolor{sotagray}{gray}{0.9}
\newcommand{\sota}[1]{\cellcolor{sotagray}\textbf{#1}}
\newcommand{\perfimpr}[2]{\makecell{#1 \\[-1pt] \tiny\textcolor{teal}{(+#2\%)}}}
\newtcolorbox{templatebox}{
  colback=black!5,
  colframe=black!75,
  boxrule=0.5pt,
  sharp corners,
  fontupper=\small\ttfamily,
  breakable
}
\newtcolorbox{promptbox}{
  colback=blue!5!white,
  colframe=blue!60!black,
  boxrule=0.5pt,
  sharp corners,
  fontupper=\small,
  breakable
}
\renewcommand{\thefootnote}{\fnsymbol{footnote}}
\DeclareMathOperator*{\argmin}{arg\,min}
\title{Online-Optimized RAG for Tool Use and Function Calling}
\author{Yu Pan$^\dagger$, \  Xiaocheng Li$^\ddagger$, \ Hanzhao Wang$^\dagger$}
\date{\small
$^\dagger$ The University of Sydney Business School, The University of Sydney,\\
$^\ddagger$
Imperial College Business School, Imperial College London
}
\begin{document}
\maketitle
\onehalfspacing

\def\thefootnote{}\relax\footnotetext{ Correspondence to 
yu.pan@sydney.edu.au
}
\begin{abstract}
In many applications, retrieval-augmented generation (RAG) drives tool use and function calling by embedding the (user) queries and matching them to pre-specified tool/function descriptions. In this paper, we address an embedding misalignment issue that often arises in practical applications due to imperfect embedding models or noisy descriptions; such misalignment may lead to incorrect retrieval and task failure. We introduce Online-Optimized RAG, a deployment-time framework that continually adapts retrieval embeddings from live interactions using minimal feedback (e.g., task success). Online-Optimized RAG applies lightweight online gradient updates with negligible per-query latency and requires no changes to the underlying LLM. The method is plug-and-play: it supports both single- and multi-hop tool use, dynamic tool inventories, and $K$-retrieval with re-ranking. We provide a problem-dependent theoretical analysis that quantifies how the method's performance depends on the initialization quality of the embeddings and other related quantities. Across diverse tool-use and document-retrieval scenarios, our Online-Optimized RAG consistently improves tool selection accuracy and end-task success, thus providing a simple, practical path to robust, self-improving RAG systems.
\end{abstract}

\section{Introduction}
Modern large language models (LLMs) increasingly rely on retrieval-augmented generation (RAG) \citep{lewis2020retrieval} to ground responses in external data. In tool-use settings, an agent encodes the user task, retrieves a tool or function (e.g., an API), and executes it: a query is embedded into a vector space and matched against a catalog of tool/function descriptions that are likewise embedded; the retriever proposes candidates by similarity (often top-$k$), and an executor (e.g., a function-calling API or tool wrapper) carries out the selected call \citep{patil2024gorilla,qin2023toolllm,lumer2024toolshed}.

However, when the system cannot incorporate domain feedback, RAG can still yield incorrect calls and answers. Retrieval quality degrades whenever the (trained) embedding geometry drifts from the operational environment. For example, such misalignments can arise from (i) noisy or incomplete tool documentation, (ii) outdated or suboptimal embedding models, (iii) shifts in user intent or phrasing relative to training or others. In such cases, semantically related tools may be mapped far apart (or vice versa), causing the retriever to surface the wrong candidate; the downstream LLM is then bottlenecked by what it is given, leading to unnecessary backtracking or failed tasks. Figure \ref{fig:motivation} shows two examples of the degradation of retrieval performance caused by bad documentation and a poor embedding model. Existing deployments typically freeze embeddings and indices after offline training \citep{zeighami2024nudge,qin2023toolllm,patil2024gorilla,li2023api}, leaving no principled, low-cost way to repair performance at deployment. While recent work adapts \emph{controllers} at inference time to decide when or how much to retrieve \citep{asai2024self,jeong2024adaptive} or tunes top-$k$ and retrieval strategies (e.g., no retrieval / one-shot / multi-step) via multi-armed-bandit or reinforcement-learning approaches \citep{fu2024autorag,tang2024mba,sun2025dynamicrag}, these methods only adjust global hyperparameters but do not update the underlying embedding space. We defer further discussion of related literature to Appendix~\ref{appx:literature}.

We introduce \emph{Online-Optimized RAG}, a deployment-time framework that continuously updates retrieval embeddings from online interactions for tool use and function calling. The core idea is simple: treat the tool retriever as an object to be optimized at test time using minimal observable feedback (e.g., whether the task is solved). After each interaction, we apply lightweight online gradient updates to the item (tool) embeddings to improve future retrieval accuracy without modifying the underlying LLM, planner, or executors. The procedure is plug-and-play, adds negligible latency, requires no privileged access to model internals, and also applies beyond tools to general document retrieval.

\begin{figure}
    \centering
    \begin{subfigure}{0.48\textwidth}
        \centering
        \includegraphics[width=\textwidth]{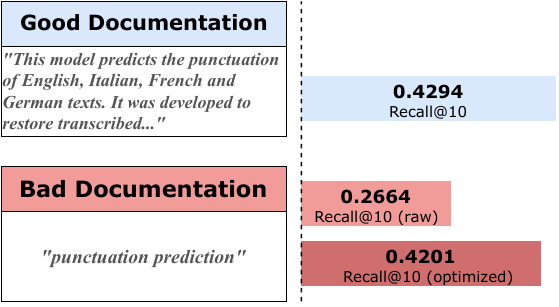}
        \caption{Good documentation vs. bad documentation}
        \label{fig:motivation_1}
    \end{subfigure}
    \hfill
    \begin{subfigure}{0.5\textwidth}
        \centering
        \includegraphics[width=\textwidth]{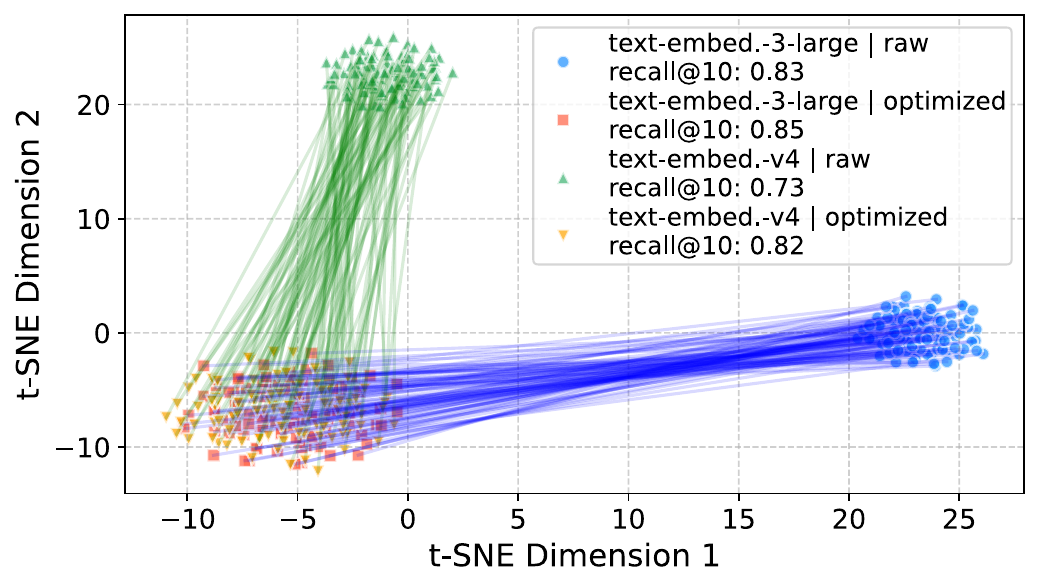}
        \caption{Larger model vs. smaller model}
        \label{fig:motivation_2}
    \end{subfigure}
 \caption{We measure retrieval with Recall@10, where larger values indicate better performance. We define \emph{optimized} as the performance after applying Online-Optimized RAG and \emph{raw} as the performance before optimization. (a) Poor documentation weakens semantic alignment and lowers retrieval quality, while Online-Optimized RAG mitigates this mismatch. (b) t-SNE visualization of the same samples for two embedding models before and after optimization. The larger model \texttt{text-embeded-3-large} generally outperforms \texttt{text-embeded-v4}. Thus, applying a small model can yield low raw performance in practice. However, the embeddings after optimization from both models move toward similar regions and achieve comparable performance, demonstrating our approach's effectiveness. We refer more numerical experiments to Section \ref{sec:exps} and the experimental setup for generating these two subfigures is provided in Appendix \ref{appx:exp}.}
    \label{fig:motivation}
\end{figure}

Our contributions are summarized as follows:
\begin{itemize}
\item \textbf{Problem formulation for online retrieval.} We cast the problem of RAG tool and function selection under an online learning framework with bandit-style execution feedback (success/failure signals only for the chosen tool), updating the retrieval geometry on the fly after collecting each feedback.
\item \textbf{A simple, scalable update rule.} We propose an online gradient descent variant that adjusts embeddings per interaction using an importance-weighted estimator. The update of the embeddings keeps computation overhead minimal for large catalogs and high-throughput systems; it is both intuitive and theoretically supported.
\item \textbf{Versatility across real retrieval settings.} The same update mechanism applies to tool \emph{and} document retrieval, single- and multi-hop pipelines, dynamic tool inventories, and multiple retrievals with reranking. This enables a straightforward integration with common function-calling frameworks and LLM agents without altering the LLM.
\item \textbf{Principled adaptation guarantees.} We derive a problem-dependent performance analysis clarifying how performance depends on the quality of the initial embeddings: strong initializations accelerate convergence toward the optimum, while weaker ones still improve steadily under online updates.
\item \textbf{Empirical evidence on comprehensive tasks.} Across diverse scenarios, our Online-Optimized RAG consistently improves tool selection and downstream task success, and the performance also transfers to general document retrieval. This demonstrates our method as a simple and robust path to self-improving RAG.
\end{itemize}

\begin{figure}[ht]
    \centering
    
    \begin{subfigure}{\textwidth}
        \centering
        \includegraphics[width=1\textwidth]{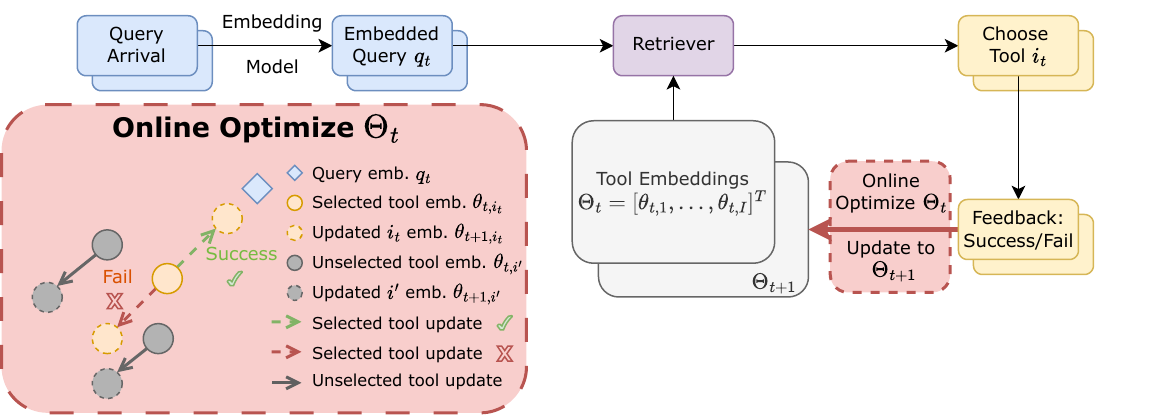}
        \caption{Online-Optimized RAG}
    \end{subfigure}
    
    \vspace{1em} 

    \caption{
Online-Optimized RAG updates tool embeddings at deployment time for each incoming query $\bm{q}_t$. When the selected tool $i_t$ \textcolor{green}{succeeds}, its embedding moves toward the query to increase similarity. When it \textcolor{red}{fails}, the embedding of $i_t$ moves away to reduce similarity. Embeddings of all unselected tools are also pushed away from the query. See Algorithm~\ref{alg:online-rag} and its discussion for details.
}
    \label{fig:diagram}
\end{figure}

\section{Problem Setup}
\subsection{RAG for tool use and function calling}

In this section, we describe the problem of RAG and present its mathematical setup. The setup can be viewed as a simplified version of the most general RAG systems. The aim here is to generate intuitions and to give us a rigorous language to describe our algorithm, and we will discuss several extensions that cover a broad range of different RAG application contexts. 

Specifically, for a RAG system, an embedding model such as OpenAI or Gemini embeddings API, maps an input query (e.g., a prompted question or task for the users) to a query embedding $\bm{q}\in\mathbb{R}^d$. On the other hand, a database or a tool pool contains $I$ candidate items (e.g., documents for answering the question, functions in MCP, or tools for solving the task). The goal of RAG is to retrieve a proper tool from the $I$ items that best fits the query $\bm{q}$. For the basic setup, we consider the case that for each query $\bm{q}$, there exists an (unobserved) optimal item $i^*\in\{1,\ldots,I\}$ that best answers the query or solves the task. The case is motivated by that in most tool-use and function-calling applications, the optimal tool or function is usually unique. However, our online-optimized framework and the algorithm also apply to more general setups of K-retrievals (with rerankers), time-varying database, and multi-hop retrievals which we defer to Section \ref{sec:variants}.

Next, given $\bm{q}$, the RAG system produces a distribution $\bm{p}=(p_{1},\ldots,p_{I})$, where $p_{i}$ is the probability of selecting item $i$. The cosine similarity-based RAG (most commonly used) represents each item $i$ with an embedding vector $\bm{\theta}_i\in \mathbb{R}^d$,
\[
\bm{\Theta}=\left[\bm{\theta}_1,\ldots,\bm{\theta}_I\right]^\top \in\mathbb{R}^{I\times d}.
\]
Then each item $i$ is scored by the softmax of the inner product
\begin{equation}
\label{eqn:samp_prob}
p_{i}(\bm{q},\bm{\Theta})
=
\frac{\exp(\bm{q}^\top\bm{\theta}_i)}{\sum_{i'=1}^I \exp(\bm{q}^\top\bm{\theta}_{i'})}.
\end{equation}
In this light, the retrieval problem can also be viewed as a multiclass classification with input $\bm{q}$ and label $i^*$ under a softmax classifier parameterized by $\bm{\Theta}$. The loss function is 
\[
l\!\left(\bm{\Theta};(\bm{q},i^*)\right)=-\log p_{i^*}(\bm{q},\bm{\Theta}).
\]

\subsection{Online-optimized framework for RAG}

We now present an online-learning setting for the RAG problem where at each time $t=1,\ldots,T$, a query arrives represented by the embedding $\bm{q}_t$. Importantly, we allow changing embeddings $\bm{\Theta}_t\in \mathbb{R}^{I\times d}$ indexed by time $t$. The benefit is that this admits an imperfect initial embedding $\bm{\Theta}_1$ and allows embeddings to be better learned and improved over time.

In hindsight of seeing all the data $\{(\bm{q}_t,i^*_t)\}_{t=1}^T$, the optimal embedding should be 
\begin{equation}
\label{eqn:opt_theta}
\bm{\Theta}^*=\argmin_{\bm{\Theta}}\sum_{t=1}^T l\!\left(\bm{\Theta};(\bm{q}_t,i^*_t)\right).
\end{equation}
In the online setting as how these RAG systems are usually deployed in practice, at each time $t$, we can choose the embeddings $\bm{\Theta}_t$ based on the past observation history $\mathcal{H}_t = \left\{\bm{q}_s, i_s, \mathbbm{1}\{i_{s}=i^*_{s}\}\right\}_{s=1}^{t-1}$. Here $\bm{q}_s$ is the query embedding at time $s$, $i_s$ is the chosen tool, and the indicator variable tells whether the chosen tool is the correct/optimal one or not. And thus the RAG performance is measured by
$$\sum_{t=1}^T l\!\left(\bm{\Theta}_t;(\bm{q}_t,i^*_t)\right).$$

We make several important remarks about the setup. First, the online setup is mainly motivated by the sequentially arriving nature of the user queries in RAG systems, and the nature makes it possible a continual refinement of the embeddings $\bm{\Theta}_t$. The setup also allows a distribution shift of $\bm{q}_t$ over time, and ideally, $\bm{\Theta}_t$ should be online optimized to adapt to the shift over time. Second, the feedback structure $\mathbbm{1}\{i_{t}=i^*_{t}\}$ is mild as it doesn't require knowing the optimal $i_t^*$ but only whether the chosen one $i_t$ equals $i_t^*$ or not. Such a \textit{bandit-style} or \textit{partial-observation} feedback system removes the need for additional data annotations on the optimal $i_t^*$ at each time (which users of the RAG may not even know), but $\mathbbm{1}\{i_{t}=i^*_{t}\}$ can be simply obtained by users' interactions (thumb-up or -down) or rule-based judges of task success. Third, we choose to optimize the database/toolbase embeddings $\bm{\Theta}_t$ instead of the query embedding model that gives $\bm{q}_t$ for two reasons: (a) the query embedding models are sometimes blackbox APIs and don't provide a fine-tuning option, and (b) they are often general embedding models that are used simultaneously for other RAG tasks, and fine-tuning against one RAG task may deteriorate its performances on others. Lastly, we note our idea of online-optimizing $\Theta_t$ can be viewed as a lightweight implementation of the tool description rewriting idea in building MCP-based agents \citep{anthro2025how}; we optimize in the embedding space, whereas \cite{anthro2025how} optimizes in the language space, both for the tool use and function callings.

\section{Online-Optimized RAG: Algorithms and Variants of RAG}
\label{sec:online_RAG}

In this section, we present our algorithm of online-optimized RAG and show how it can be applied to several extensions beyond the main setup.


\begin{algorithm}[ht!]
\centering
\caption{Online-Optimized RAG (ORAG)}
\label{alg:online-rag}
\begin{algorithmic}[1]
\Require Initial embeddings $\bm{\Theta}_1=\left[\bm{\theta}_{1,1},\bm{\theta}_{1,2},\ldots,\bm{\theta}_{1,I}\right]^\top\in\mathbb{R}^{I\times d}$; learning rate $\eta>0$
\For{$t=1,2,\ldots$}
\State Observe query embedding $\bm{q}_t\in\mathbb{R}^d$.
\State Compute sampling probabilities from the current $\bm{\Theta}_t$ by \eqref{eqn:samp_prob}:
\begin{equation}
\label{eqn: prob_abb}
    p_{t,i}=p_i(\bm{q}_t,\bm{\Theta}_t),\quad i=1,\ldots,I.
\end{equation}

\State Sample an item (tool/document/function) $i_t\sim \bm{p}_t=(p_{t,1},\ldots,p_{t,I})$ and get feedback $\mathbbm{1}\{i_t=i^*_t\}$.
\State Compute the (stochastic) gradient estimate $\bm{g}_{t,i}$ for each item $i$:
\begin{equation}
    \label{eqn:grad_compute}
    \bm{g}_{t,i}=\left(p_{t,i}-\frac{\mathbbm{1}\{i=i_t\}\mathbbm{1}\{i_t=i^*_t\}}{p_{t,i_t}}\right)\bm{q}_t.
\end{equation}

\State Update embeddings for $\bm{\Theta}_{t+1}=\left[\bm{\theta}_{t+1,1},\ldots,\bm{\theta}_{t+1,I}\right]^\top$ by
\[
\bm{\theta}_{t+1,i} =\bm{\theta}_{t,i} - \eta \cdot \bm{g}_{t,i}.
\]
\Statex \(\triangleright\) \textit{Optional:} project $\bm{\theta}_{t+1,i}$ into some desired subspace (such as unit ball $\{\bm{\theta}\in\mathbb{R}^d: \|\bm{\theta}\|_2\leq 1\}$)
\EndFor
\end{algorithmic}
\end{algorithm}

Algorithm~\ref{alg:online-rag} implements the standard RAG pipeline when handling a stream of user queries, except for Step 5 and Step 6 where it updates the embeddings $\bm{\Theta}_t.$ Essentially, the update performs a stochastic gradient descent with respect to the loss function \eqref{eqn:opt_theta}. We note that in calculating the update \eqref{eqn:grad_compute}, it only requires the knowledge of $\mathbbm{1}\{i=i^*_t\}$, i.e., we only need to know whether the chosen item $i_t$ is the correct one or not, but no need to know $i_t^*$. As mentioned earlier, this creates much convenience in annotation -- no need for hiring annotators to label $i_t^*.$ The most important structural property of the update is described by the following lemma.

\begin{lemma}
\label{lem:unbias_grad}
For $i=1,...,I$,
\[
\mathbb{E}\!\left[\bm{g}_{t,i}\right]= \frac{\partial l\!\left(\bm{\Theta};(\bm{q}_t,i^*_t)\right)}{\partial \bm{\theta}_i}\Bigg\vert_{\bm{\Theta}=\bm{\Theta}_t}
\]
where the expectation is over the tool selection $i_t\sim \bm{p}_t$ as defined in Algorithm~\ref{alg:online-rag}.
\end{lemma}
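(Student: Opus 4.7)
The plan is a direct calculation in two pieces: first compute the true gradient of $l$ with respect to $\bm{\theta}_i$, then take expectation of the estimator $\bm{g}_{t,i}$ over the random draw $i_t\sim\bm{p}_t$ and show the two match.

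First, I would differentiate $l(\bm{\Theta};(\bm{q}_t,i^*_t)) = -\log p_{i^*_t}(\bm{q}_t,\bm{\Theta})$ using the standard softmax-cross-entropy computation. Since $p_i(\bm{q}_t,\bm{\Theta})$ depends on $\bm{\theta}_i$ through the inner product $\bm{q}_t^\top \bm{\theta}_i$ in both the numerator (when $i=i_t^*$) and the denominator, the routine calculation gives
\[
\frac{\partial l(\bm{\Theta};(\bm{q}_t,i^*_t))}{\partial \bm{\theta}_i}\bigg|_{\bm{\Theta}=\bm{\Theta}_t} = \bigl(p_{t,i} - \mathbbm{1}\{i = i^*_t\}\bigr)\bm{q}_t.
\]
I would record this as the target expression.

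Next, I would take the expectation of the estimator in \eqref{eqn:grad_compute} with respect to $i_t\sim\bm{p}_t$, treating $\bm{q}_t$, $i^*_t$, and $\bm{\Theta}_t$ as fixed (and $p_{t,i}$ therefore deterministic). Linearity reduces the task to evaluating
\[
\mathbb{E}\!\left[\frac{\mathbbm{1}\{i=i_t\}\,\mathbbm{1}\{i_t=i^*_t\}}{p_{t,i_t}}\right].
\]
Expanding the expectation as the sum over $j=1,\ldots,I$ of $p_{t,j}\cdot \mathbbm{1}\{i=j\}\mathbbm{1}\{j=i^*_t\}/p_{t,j}$, the $p_{t,j}$ factors cancel (this is exactly the importance-weighting trick behind the estimator), and only the single term with $j=i=i^*_t$ survives, yielding $\mathbbm{1}\{i=i^*_t\}$. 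Substituting back gives $\mathbb{E}[\bm{g}_{t,i}] = (p_{t,i} - \mathbbm{1}\{i=i^*_t\})\bm{q}_t$, which matches the gradient computed in the first step.

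There is no real obstacle here: the lemma is essentially a verification that the importance weight $1/p_{t,i_t}$ is chosen precisely so that the stochastic selection of $i_t$ leaves an unbiased estimator of the full-information gradient. The only subtlety to flag is that $p_{t,i_t} > 0$ for every $i_t$ in the support of $\bm{p}_t$, so the division by $p_{t,i_t}$ is well-defined whenever the indicator is nonzero, and the conditioning on $(\bm{q}_t,\bm{\Theta}_t,i_t^*)$ is legitimate since the randomness in $i_t$ is the only source of stochasticity inside the expectation.
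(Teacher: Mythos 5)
Your proposal is correct and follows essentially the same route as the paper: compute the full-information softmax--cross-entropy gradient $(p_{t,i}-\mathbbm{1}\{i=i^*_t\})\bm{q}_t$, then expand $\mathbb{E}_{i_t}[\bm{g}_{t,i}]$ as a sum over the support of $\bm{p}_t$ so that the importance weight $1/p_{t,i_t}$ cancels the sampling probability, leaving $\mathbbm{1}\{i=i^*_t\}$. The paper's proof is the same direct calculation, so there is nothing to add.
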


The lemma states that the update term at time $t$ can be viewed as a stochastic gradient of the $t$-th term in the loss function \eqref{eqn:opt_theta}. This enables a clean theoretical analysis of the algorithm, which we defer to Section \ref{sec:theory}. The key to achieve this property in Lemma \ref{lem:unbias_grad} is the coefficient before $\bm{q}_t$. Such a design often appears for bias correction in adversarial online learning \citep{kakade2008efficient, auer2002nonstochastic}. 

Intuitively, for the chosen item $i=i_t$, if the choice is incorrect ($i_t\neq i_t^*$), then $\bm{g}_{t,i_t}=p_{t,i_t}\cdot \bm{q}_t$ and the update
$\bm{\theta}_{t+1,i_t}=\bm{\theta}_{t,i_t}-\eta\,p_{t,i_t}\cdot \bm{q}_t$ moves $\bm{\theta}_{t,i_t}$ away from $\bm{q}_t$, decreasing their similarity. If the choice is correct ($i_t=i_t^*$), then $\bm{g}_{t,i_t}=\Bigl(p_{t,i_t}-\tfrac{1}{p_{t,i_t}}\Bigr)\bm{q}_t$, so $p_{t,i_t}-\tfrac{1}{p_{t,i_t}}\le 0$ and the update moves $\bm{\theta}_{t,i_t}$ toward $\bm{q}_t$, increasing similarity. The magnitude of this correction is proportional to $\bigl|p_{t,i_t}-\tfrac{1}{p_{t,i_t}}\bigr|$, which is larger when the model’s current confidence $p_{t,i_t}$ is smaller, i.e., we correct more aggressively when we were unsure yet happened to be right. For all other items $i\neq i_t$, the update $\bm{\theta}_{t+1,i}=\bm{\theta}_{t,i}-\eta\,p_{t,i}\cdot \bm{q}_t$ moves $\bm{\theta}_{t,i}$ away from $\bm{q}_t$, decreasing their similarity. This nonzero adjustment for unchosen items is because the loss couples all items through the softmax normalization, and hence increasing probability on the (unknown) correct item necessarily requires decreasing probability on the others. The dynamics are also visualized in Figure \ref{fig:diagram}.


We make the following remarks about the algorithm:

\textbf{Learning rate.} The parameter $\eta$ controls the learning rate of embedding updates. As shown later, a proper choice of $\eta$ yields convergence of the loss toward the optimum. In practice, a small constant (e.g., $\eta=10^{-5}$) prevents overly large changes. One can also use a time–varying schedule, e.g., $\eta_t=c/\sqrt{t}$ with $c>0$, to taper updates as more (online) data arrive.

\textbf{Session scope.} The algorithm imposes no requirements on how $(\bm{q}_t,i_t^*)$ are generated. In practice, the embedding updates may aggregate interactions from a broad user population (to adapt universally) or from a single user (to personalize the system). The prototypical version of Algorithm \ref{alg:online-rag} performs updates upon every sample, but one may easily convert it into a batched version by batching samples from multiple timestamps or even an offline version.

\textbf{Computation.} As noted earlier, the algorithm is lightweight, and it performs one gradient update at each time step. There is an even more efficient version which only performs an update to the embedding of the chosen item $i_t$. We defer more details to Appendix \ref{appx:efficient_grad}.

 \textbf{Exploration.}
Unlike Banditron \citep{kakade2008efficient} for online multiclass prediction, which enforces uniform exploration with a fixed probability, our procedure utilizes the inherent randomness of the vector $\bm{p}_t$. The advantage of this exploration-free design is that it doesn't sacrifice the current user experience for future improvement of the system.

\subsection{Variants of the RAG setup}
\label{sec:variants}

Now we show how Algorithm~\ref{alg:online-rag} can be applied to more general RAG settings than the setup in the last section. We report its numerical performance in the next section, and defer more implementation details to Appendix~\ref{appx:variants}.

\textbf{$K$ retrievals with reranker.} Algorithm~\ref{alg:online-rag} retrieves one single tool/function per round. A practical extension is to retrieve $K
\ge 2$ candidates and pass them to a reranker (e.g., a cross-encoder or an LLM judge) that selects the best among them \citep{qin2023toolllm,xu2024enhancing}. In Algorithm~\ref{alg:topk-rerank}, we deal with the RAG system with a reranker; instead of sampling one item, it samples multiple items and lets the reranker decide the best one. The algorithm thus modifies the sampling step of Algorithm~\ref{alg:online-rag} (line~4) by inserting a reranking block.

\textbf{Time-varying database.} Algorithm~\ref{alg:online-rag} is also compatible with a dynamic toolbox $\{1,\ldots,I\}$ that changes over time. In its variant Algorithm~\ref{alg:dynamic-db}, at the start of round $t$, it first updates the available set of items and adjusts the embedding matrix $\bm{\Theta}_t$ accordingly (adds rows for new items and removes rows for obsolete ones). Then compute $\bm{p}_t$ and proceed as usual, i.e., ensure $\bm{\Theta}_t$ contains exactly the items available at time $t$. This operates smoothly because (i) the sampling distribution is softmax-based (it automatically re-normalizes over the current items), and (ii) the updates are item-wise (lines~5-6 of Algorithm~\ref{alg:online-rag}). This setting captures the case where the optimal tool for certain queries may not exist in early phases (small $t$) and only becomes available at a later stage. For example, the optimal tool $i^\prime$ for a query $\bm{q}^\prime$ is unavailable for $t<10$, and $\bm{q}_t=\bm{q}^\prime$ for all $t\le 10$. Even without seeing $i^\prime$, Algorithm~\ref{alg:dynamic-db} will repeatedly push the existing item embeddings away from $\bm{q}^\prime$ whenever the sampled item is incorrect. This decreases their logits $\bm{q}^{\prime\top}\bm{\theta}_{t,i}$ and thus their softmax probabilities relative to the (eventual) optimal item. When $i^\prime$ is introduced at $t=10$, even with an untouched, reasonable initialization aligned to $\bm{q}^\prime$, its selection probability will be comparatively higher, improving retrieval without any special warm start.

\textbf{Multi-hop retrieval.}
Some RAG tasks require \emph{multi-hop} retrieval to select multiple items that jointly solve the task \citep{tang2024multihop}. A common strategy is to use a planner (e.g., an LLM) to decompose the input into sub-tasks \citep{shen2023hugginggpt,qin2023toolllm,lumer2024toolshed}. In such a setting, we can apply Algorithm~\ref{alg:online-rag} at each hop by reducing the multi-hop query/task to a sequence of single-hop sub-tasks. Concretely, in the variant Algorithm~\ref{alg:multihop}, at hop $h$ (the $h$-th sub-task), we run Algorithm~\ref{alg:online-rag} to select an item and obtain feedback from a judge (e.g., an LLM or rule-based judge when a human is unavailable) indicating whether the selection advances or answers the query. These per-hop updates align the embeddings across the entire multi-hop pipeline.

Algorithms \ref{alg:topk-rerank}, \ref{alg:dynamic-db} and \ref{alg:multihop} are all formally described in Appendix \ref{appx:variants}.

\section{Experiments}
\label{sec:exps}

For Algorithm~\ref{alg:online-rag} and its variants, we evaluate them on both tool calling and information retrieval tasks and conduct experiments on several open source benchmarks. We summarize the experiment setup here and defer the implementation details to Appendix \ref{appx:exp}. Unless otherwise noted, all results of our methods are computed as the average of five independent runs.

\textbf{Datasets.} For tool use, we adopt \textit{UltraTool}~\citep{huang2024planning} and three sub-tasks from \textit{ToolRet}~\citep{shi2025retrieval}: \textit{ToolRet-Web}, \textit{ToolRet-Code}, and \textit{ToolRet-Customized}. For information retrieval, we use \textit{FiQA} benchmark \citep{thakur2021beir}. For multi-hop reasoning, we use \textit{MultiHopRAG}~\citep{tang2024multihop}.  These datasets provide challenging real-world scenarios for retrieval tasks.

\textbf{Baselines.} We compare our method against a strong suite of retrieval models following the methodology in~\cite{shi2025retrieval}. The baselines include a sparse retriever based on \texttt{BM25}~\citep{huang2024planning}, competitive dense retrievers accessed via API: OpenAI's \texttt{text-embedding-3-large} and Qwen's \texttt{text-embedding-v4}, and also two state-of-the-art cross-encoder models of different sizes based on previous research and benchmark reports~\citep{muennighoff2022mteb, tang2024multihop, shi2025retrieval}: \texttt{Qwen3-Reranker-0.6B} and \texttt{bge-reranker-v2-gemma}.

\textbf{Metrics.} For all retrieval tasks, we report performance using standard information retrieval metrics: Recall@k (\texttt{R@k}) and NDCG@k (\texttt{N@k}). Following common practice, we choose $k=10$. For tool-use simulation experiments, we also report the function-call accuracy.

\subsection{Retrieval performance}
\label{sec:baseline_results}

We begin by assessing  Algorithm~\ref{alg:online-rag} through a comparison with strong baselines in the retrieval literature. We report metrics after applying the method with an average of $3000$ updates to the embeddings (exact numbers vary based on the selected batch size and dataset size), and we include the initial models without online updates. Table~\ref{tab:results_k10_adjusted} presents the results, where \textbf{Ours} denotes the results of Algorithm~\ref{alg:online-rag}.

\begin{table*}[!ht]
\centering
\footnotesize
\setlength{\tabcolsep}{3pt} 
\caption{Retrieval performance at k=10. The best result in each column is highlighted. Percentage improvements of our methods over their baselines are shown below each score. Dataset names are abbreviated: \textbf{U-Tool} (UltraTool), \textbf{T-Web} (ToolRet-Web), \textbf{T-Code} (ToolRet-Code), and \textbf{T-Custom} (ToolRet-Customized).}
\label{tab:results_k10_adjusted}
\begin{tabular}{lcccccccccc}
\toprule
\multirow{2}{*}{\textbf{Method}} & \multicolumn{2}{c}{\textbf{U-Tool}} & \multicolumn{2}{c}{\textbf{FiQA}} & \multicolumn{2}{c}{\textbf{T-Web}} & \multicolumn{2}{c}{\textbf{T-Code}} & \multicolumn{2}{c}{\textbf{T-Custom}} \\
\cmidrule(lr){2-3} \cmidrule(lr){4-5} \cmidrule(lr){6-7} \cmidrule(lr){8-9} \cmidrule(lr){10-11}
& R@10 & N@10 & R@10 & N@10 & R@10 & N@10 & R@10 & N@10 & R@10 & N@10 \\
\midrule
BM25 & 0.3208 & 0.2003 & 0.2955 & 0.2326 & 0.1778 & 0.1428 & 0.3446 & 0.2421 & 0.4922 & 0.3816 \\
bge-reranker-v2-gemma & 0.8448 & 0.5852 & \sota{0.7500} & 0.4655 & \sota{0.4849} & \sota{0.3486} & \sota{0.6081} & \sota{0.5322} & 0.6455 & 0.5221 \\
Qwen3-Reranker-0.6B & 0.7200 & 0.4590 & 0.5500 & 0.4361 & 0.3622 & 0.1897 & 0.5802 & 0.4781 & 0.6274 & 0.4923 \\
\midrule
text-embedding-v4 & 0.7451 & 0.5064 & 0.5335 & 0.4604 & 0.2701 & 0.1453 & 0.5291 & 0.3770 & 0.5066 & 0.4097 \\
text-embedding-3-large & 0.8356 & 0.6067 & 0.6258 & 0.5462 & 0.3243 & 0.1675 & 0.5347 & 0.3582 & 0.6378 & 0.5204 \\
\midrule
\textbf{Ours (text-emb.-v4)} & \perfimpr{0.8256}{8.05} & \perfimpr{0.5982}{8.28} & \perfimpr{0.5464}{1.29} & \perfimpr{0.4698}{0.94} & \perfimpr{0.3657}{9.56} & \perfimpr{0.1968}{5.15} & \perfimpr{0.5960}{6.69} & \perfimpr{0.4280}{5.10} & \perfimpr{0.5739}{6.73} & \perfimpr{0.4398}{3.01} \\
\textbf{Ours (text-emb.-3-L.)} & \sota{\perfimpr{0.8682}{3.26}} & \sota{\perfimpr{0.6522}{4.55}} & \perfimpr{0.6421}{1.63} & \sota{\perfimpr{0.5680}{2.18}} & \perfimpr{0.3780}{5.37} & \perfimpr{0.2065}{3.90} & \perfimpr{0.5849}{5.02} & \perfimpr{0.4070}{4.88} & \sota{\perfimpr{0.6937}{5.59}} & \sota{\perfimpr{0.5735}{5.31}} \\
\bottomrule
\end{tabular}
\end{table*}

\textbf{Results.} The results give several key insights. First, our proposed method demonstrates a significant and consistent improvement over its base dense retrieval models. For example, on the ToolRet-Code benchmark, both the \texttt{text-embedding-large-3} and the \texttt{text-embedding-v4} baselines gain significant performance improvements, and the initially underperformed \texttt{text-embedding-v4} even outperforms the \texttt{text-embedding-large-3} after the optimization via our method. This shows our approach is not only effective but also versatile, enhancing strong existing models without requiring architectural changes. Second, traditional sparse retrieval methods like \texttt{BM25}, which rely on lexical matching, consistently underperform across all benchmarks. This highlights the necessity of semantic understanding for the nuanced task of tool retrieval, where the user's intent may not share keywords with the tool's description.  Finally, while powerful reranker models can achieve high performance on specific tasks, their practical utility is often limited by high computational costs, making them unsuitable for real-time applications. As visualized in Figure~\ref{fig:perf_vs_cost}, our method provides a much more balanced and practical solution, achieving state-of-the-art performance while maintaining low inference latency.

\begin{figure}[hbpt]
\centering
\includegraphics[width=0.7\linewidth]{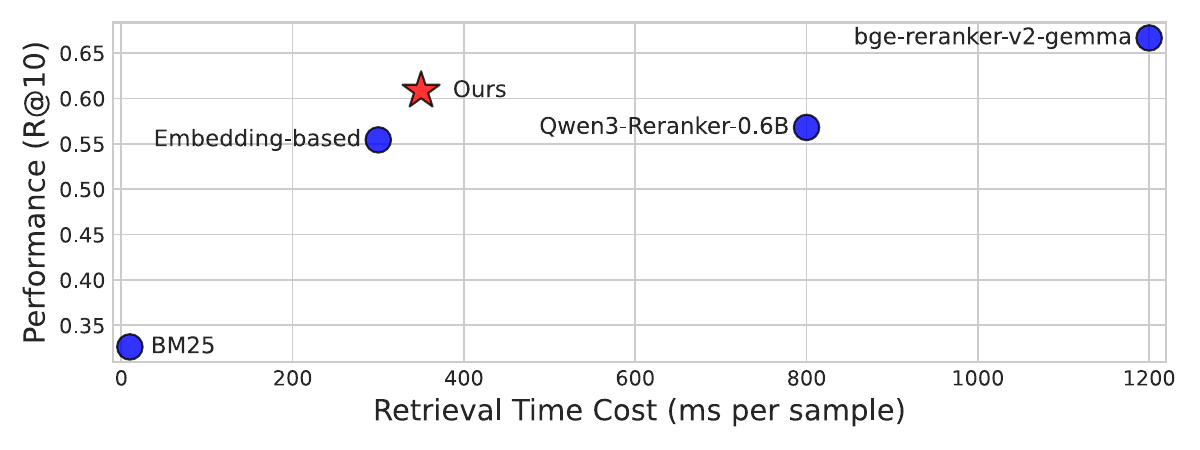}
\caption{Performance vs. time cost of different retrieval methods. The performance is the arithmetic average of R@10 results in Table \ref{tab:results_k10_adjusted}, and the time cost is evaluated and recorded on the same GPU server. The embedding model time cost is obtained by using the \texttt{Qwen3-Embedding-4B} as the proxy.}
\label{fig:perf_vs_cost}
\end{figure}

\subsection{Algorithm \ref{alg:online-rag}'s variants evaluation}

We now evaluate the adaptability of our method across several practical scenarios, including integration with rerankers, time-varying databases, and multi-hop retrieval tasks (see the variants discussed in Section~\ref{sec:variants}). We use the \textit{UltraTool} benchmark for experiments on dynamic databases and integration with rerankers, and the \textit{MultiHopRAG} benchmark for the multi-hop retrieval task. The detailed experiment setup is provided in Appendix \ref{appx:exp}.

\textbf{Integration with Rerankers.} We consider a pipeline where an LLM reranks the top candidates retrieved by our model before a final tool is selected. For each query, a reranker model reranks the sampled 10 tool documentations, and the success of the final tool call provides the gradients for our algorithm as shown in Algorithm \ref{alg:topk-rerank}. For reproducibility, we employ RankGPT~\citep{sun2023chatgpt} with \texttt{gpt-4.1-nano-2025-04-14} as the reranker. We compare this LLM-as-reranker approach against our standard method that samples directly from the learned policy and also the baseline where we make no updates to embeddings. The results are presented in Figure~\ref{fig:llm_rerank_exp_rslt}. We observe that during the early stage, it is indifferent whether to use a reranker or not. Later, the reranker accelerates improvement in retrieval performance. The reason is that a stronger reranker increases the probability of selecting the correct item. Intuitively, a successful retrieval yields a precise signal that the chosen item is correct, while a failed retrieval only indicates that the chosen item is incorrect without revealing which item is correct. By increasing the rate of successful retrievals, the reranker provides more informative feedback for subsequent learning.
\begin{figure}[hbpt]
\centering
\includegraphics[width=1\linewidth]{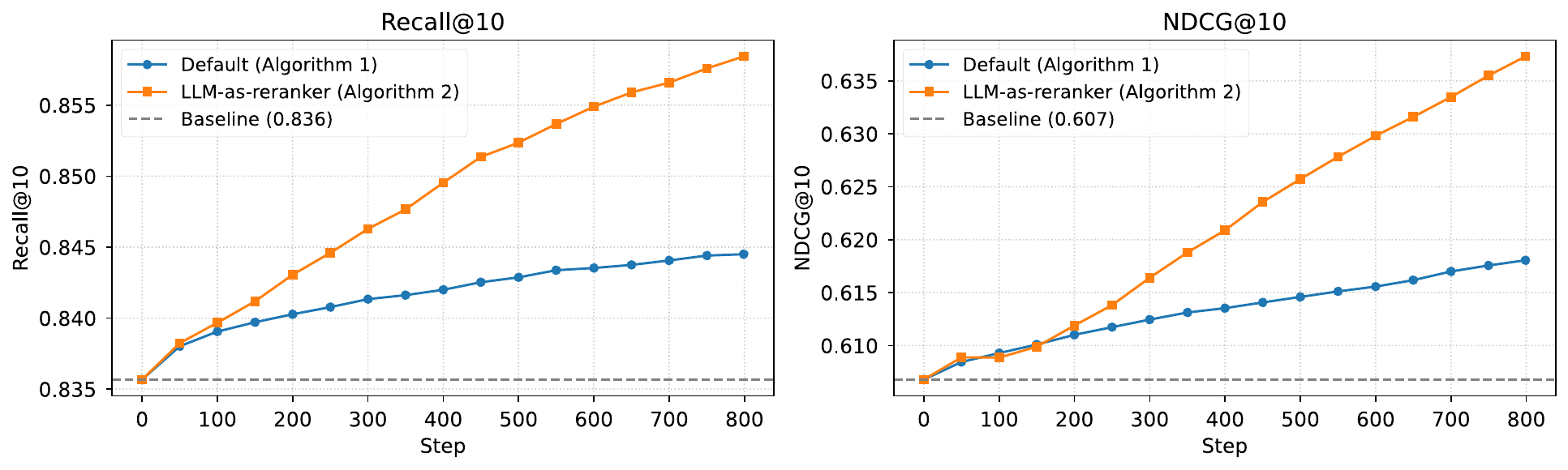}
\caption{Performance on \textit{UltraTool} with and without an LLM-based reranker. Integrating an LLM reranker provides a stronger signal, accelerating learning and further boosting retrieval performance.}
\label{fig:llm_rerank_exp_rslt}
\end{figure}

\textbf{Time-varying database.} We study a setting where the toolbase changes over time. At the start, only a random subset of tools is available, and the remaining tools are introduced after half of the queries have been processed. Under this setup, part of the embeddings cannot be updated during the first phase, and for some queries, the ground truth optimal tool may be temporarily unavailable. Even so, our method can improve the performance as discussed in Section~\ref{sec:variants}. We compare this dynamic setting, labeled as \textit{Dynamic DB (Algorithm 3)},  with a static baseline where all tools are available from the beginning, labeled as \textit{Default (Algorithm 1)}. As in Figure~\ref{fig:varying_db_exp_rslt}, though removing embeddings at the beginning can reduce recall, our method adapts to the changing set of tools and achieves consistent gains.

\begin{figure}[hbpt]
\centering
\includegraphics[width=1\linewidth]{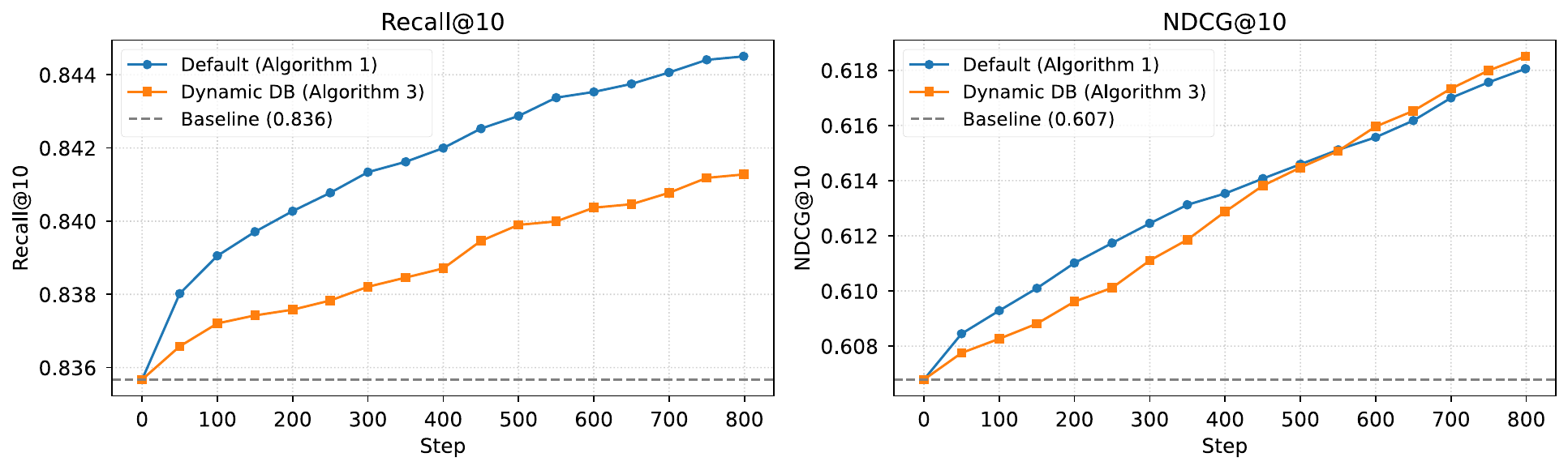}
\caption{Performance on \textit{UltraTool} in static vs. dynamic database settings. Our method demonstrates robust adaptation, maintaining consistent improvements even when the toolset changes midway through the experiment.}
\label{fig:varying_db_exp_rslt}
\end{figure}

\textbf{Multi-hop retrieval.} We evaluate our method in a multi-hop setting, where solving an input task requires a sequence of successful tool retrievals. The plug-and-play nature of our algorithm enables straightforward integration into the existing multi-hop frameworks. We implement a query decomposition pipeline in which a planner first decomposes the input task into several subtasks, and Algorithm~\ref{alg:online-rag} is applied to each subtask, as discussed in Section~\ref{sec:variants}. Here, for each subtask query, we retrieve only 5 documents. We evaluate on the \textit{MultiHopRAG} benchmark, and the performance changes are shown in Figure~\ref{fig:multihop_exp}. This integration yields a substantial improvement in end-to-end question answering accuracy, from \textbf{0.55} to \textbf{0.68}.

\begin{figure}[hbpt]
\centering
\includegraphics[width=1\linewidth]{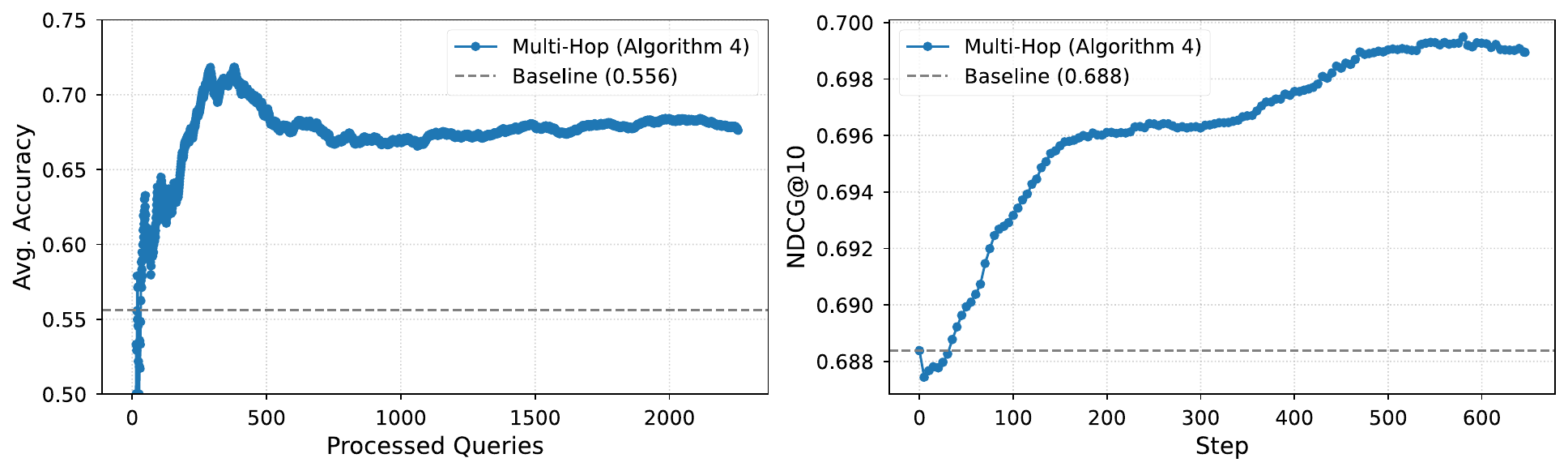}
\caption{Performance changes on the \textit{MultiHopRAG} benchmark. The baseline is computed using the same retrieval and question-answering workflow (see Appendix \ref{appx:exp}) with raw \texttt{text-embedding-3-large} embeddings. Integrating our method into a task decomposition pipeline demonstrates stable learning, leading to improved multi-hop QA performance.}
\label{fig:multihop_exp}
\end{figure}

\section{Theoretical Analysis}
\label{sec:theory}

In this section, we provide a theoretical analysis of Algorithm \ref{alg:online-rag}. The aim is not so much for a theoretical peace of mind but to derive more insights for implementing the algorithm in practice. Generally, the performance of an online algorithm/policy $\pi$ is measured by its \textit{regret}
\[
\mathrm{Reg}^{\pi}\!\left(\{(\bm{q}_t,i^*_t)\}_{t=1}^T\right)
=\sum_{t=1}^T l\!\left(\bm{\Theta}_t;(\bm{q}_t,i^*_t)\right)
-\sum_{t=1}^T l\!\left(\bm{\Theta}^*;(\bm{q}_t,i^*_t)\right),
\]
where $\bm{\Theta}_t$ is the embedding at time $t$ specified by the policy $\pi$, Algorithm \ref{alg:online-rag} in our context, and $\bm{\Theta}^*$ is the optimal embedding defined by \eqref{eqn:opt_theta} upon optimizing over all the queries in a hindsight manner. As noted earlier, we make no assumption on the generation of $\bm{q}_t$ and $i_t^*.$


\begin{theorem}
\label{thm:reg_bound}
For any sequence $\{(\bm{q}_t,i^*_t)\}_{t=1}^T$, Algorithm~\ref{alg:online-rag} (ORAG) with initialization $\bm{\Theta}_1$ and learning rate $\eta>0$ satisfies
\[
\mathbb{E}\left[\mathrm{Reg}^{\text{ORAG}}\!\left(\{(\bm{q}_t,i^*_t)\}_{t=1}^T\right)\right]
\le \frac{\|\bm{\Theta}_1-\bm{\Theta}^*\|_F^2}{2\eta}
+\frac{\eta}{2}\sum_{t=1}^T\!\left(\frac{1}{p_{t,i^*_t}}-2p_{t,i^*_t}+1\right)\!\|\bm{q}_t\|_2^2,
\]
where $\|\cdot\|_F$ denotes the Frobenius norm and the expectation is with respect to the randomness of $i_t$'s.  
\end{theorem}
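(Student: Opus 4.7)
The plan is to adapt the textbook analysis of stochastic online gradient descent to our setting, using Lemma~\ref{lem:unbias_grad} to handle the bandit feedback and computing the second moment of the stochastic gradient $\bm{g}_t=[\bm{g}_{t,1},\ldots,\bm{g}_{t,I}]^\top$ explicitly. Two structural facts are enough to set it up: (i) the loss $l(\bm{\Theta};(\bm{q}_t,i^*_t))=-\log p_{i^*_t}(\bm{q}_t,\bm{\Theta})$ is a cross-entropy composed with a linear score map, hence convex in $\bm{\Theta}$; and (ii) Algorithm~\ref{alg:online-rag} is a Euclidean gradient step $\bm{\Theta}_{t+1}=\bm{\Theta}_t-\eta\bm{g}_t$ (the optional projection onto a convex set containing $\bm{\Theta}^*$ is non-expansive and only helps).

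First I would apply the usual three-line OGD identity: expanding $\|\bm{\Theta}_{t+1}-\bm{\Theta}^*\|_F^2$ gives
\[
2\eta\langle\bm{g}_t,\bm{\Theta}_t-\bm{\Theta}^*\rangle_F = \|\bm{\Theta}_t-\bm{\Theta}^*\|_F^2 - \|\bm{\Theta}_{t+1}-\bm{\Theta}^*\|_F^2 + \eta^2\|\bm{g}_t\|_F^2.
\]
Summing over $t$ telescopes the first two terms. Taking expectation, conditioning on $\mathcal{H}_t$, and invoking Lemma~\ref{lem:unbias_grad} (together with the tower property) replaces $\mathbb{E}\langle\bm{g}_t,\bm{\Theta}_t-\bm{\Theta}^*\rangle_F$ by $\mathbb{E}\langle\nabla_{\bm{\Theta}} l(\bm{\Theta}_t;(\bm{q}_t,i^*_t)),\bm{\Theta}_t-\bm{\Theta}^*\rangle_F$. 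Convexity of $l$ in $\bm{\Theta}$ then lower-bounds the latter by the instantaneous regret $\mathbb{E}[l(\bm{\Theta}_t;(\bm{q}_t,i^*_t))-l(\bm{\Theta}^*;(\bm{q}_t,i^*_t))]$, yielding the generic bound
\[
\mathbb{E}[\mathrm{Reg}^{\mathrm{ORAG}}] \le \frac{\|\bm{\Theta}_1-\bm{\Theta}^*\|_F^2}{2\eta} + \frac{\eta}{2}\sum_{t=1}^T \mathbb{E}\|\bm{g}_t\|_F^2.
\]

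Next I would evaluate $\mathbb{E}\|\bm{g}_t\|_F^2$ in closed form. Because every row of $\bm{g}_t$ is a scalar multiple of $\bm{q}_t$, we have $\|\bm{g}_t\|_F^2=\|\bm{q}_t\|_2^2\sum_i c_{t,i}^2$ with $c_{t,i}=p_{t,i}-\mathbbm{1}\{i=i_t\}\mathbbm{1}\{i_t=i^*_t\}/p_{t,i_t}$. Expanding the square and summing over $i$ collapses the indicator-bearing pieces to the single index $i=i_t$, giving $\sum_i c_{t,i}^2=\sum_i p_{t,i}^2-2\mathbbm{1}\{i_t=i^*_t\}+\mathbbm{1}\{i_t=i^*_t\}/p_{t,i_t}^2$. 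Taking expectation over $i_t\sim\bm{p}_t$, the cross term contributes $-2p_{t,i^*_t}$ and the last term contributes $1/p_{t,i^*_t}$ (the stray $p_{t,i_t}$ in the denominator cancels the sampling weight). Finally, bounding $\sum_i p_{t,i}^2\le\sum_i p_{t,i}=1$ yields the advertised coefficient $1/p_{t,i^*_t}-2p_{t,i^*_t}+1$, which combined with the generic OGD bound completes the proof.

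The main difficulty is not any single calculation but the measure-theoretic bookkeeping: $\bm{\Theta}_t$ is $\mathcal{H}_t$-measurable while $\bm{g}_t$ further involves the fresh randomness of $i_t$, so one must consistently pass expectations through both layers and apply Lemma~\ref{lem:unbias_grad} only after conditioning on $\mathcal{H}_t$. A secondary subtlety is verifying that the (optional) projection step preserves the inequality $\|\bm{\Theta}_{t+1}-\bm{\Theta}^*\|_F^2\le\|\bm{\Theta}_t-\eta\bm{g}_t-\bm{\Theta}^*\|_F^2$ whenever $\bm{\Theta}^*$ lies in the feasible set, so the telescoping argument and the final bound remain unaffected.
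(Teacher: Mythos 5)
Your proposal is correct and follows essentially the same route as the paper's own proof: the standard online-gradient-descent telescoping identity, convexity of the cross-entropy loss, the unbiasedness from Lemma~\ref{lem:unbias_grad} applied after conditioning on the history, and an explicit computation of $\mathbb{E}\|\bm{G}_t\|_F^2$ with $\sum_i p_{t,i}^2\le 1$ yielding the coefficient $1/p_{t,i^*_t}-2p_{t,i^*_t}+1$. Your second-moment expansion is in fact written more carefully than the paper's intermediate display, and your remarks on the projection step's non-expansiveness and the measurability bookkeeping are sound additions rather than deviations.
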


Theorem~\ref{thm:reg_bound} gives a problem-dependent regret bound for Algorithm~\ref{alg:online-rag}. The first term depends on the initialization $\bm{\Theta}_1$, whereas the second depends on the probabilities $p_{t,i^*_t}$ of selecting the optimal items. For the initialization quality, the term $\|\bm{\Theta}_1-\bm{\Theta}^*\|_F^2$ quantifies how close the initial embeddings are to the optimum. If the initialization is good (i.e., close to $\bm{\Theta}^*$), only minor updating is needed. For the second term, we can interpret it as the confidence in the optimal item. The summation grows when the model assigns low probabilities to the optimal item. Intuitively, lower confidence (smaller $p_{t,i^*_t}$) incurs larger regret. In particular, if $p_{t,i^*_t}=1$ then the contribution at time $t$ is zero and the corresponding gradient $\bm{g}_{t,i}$ vanishes for all $i$ and there is no need to adjust the embeddings. Further, in an unrealistically ideal case, if $p_{t,i^*_t}\equiv 1$ for all $t$, then $\mathrm{Reg}^{\text{ORAG}}=0$ and Algorithm~\ref{alg:online-rag} leaves $\bm{\Theta}_t\equiv \bm{\Theta}_1$ unchanged. In the light of Lemma \ref{lem:unbias_grad}, the proof follows the standard analysis of online gradient descent \citep{hazan2016introduction} and is deferred to Appendix \ref{appx:proofs}.

\begin{figure}[hbpt]
\centering
\includegraphics[width=0.8\linewidth]{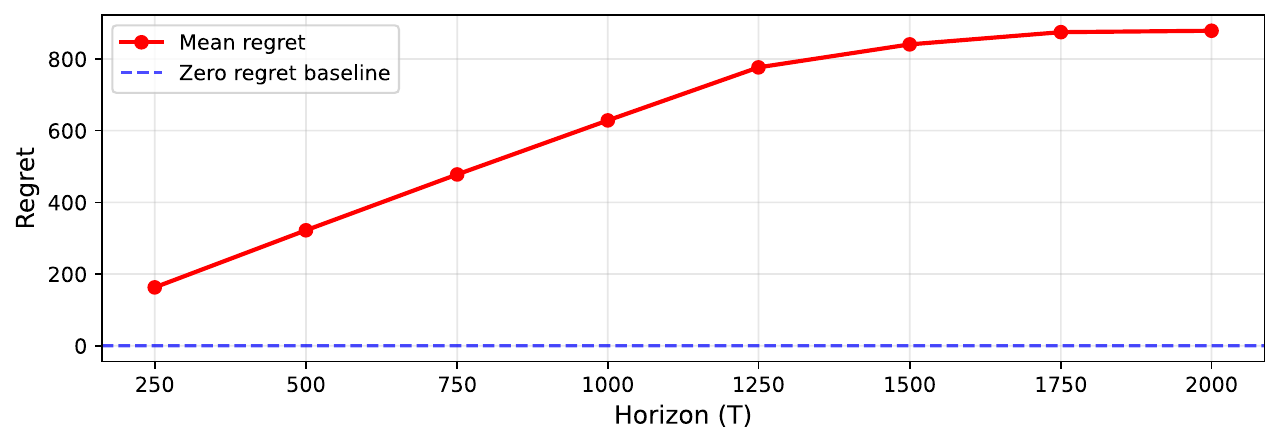}
\caption{Cumulative regret for Algorithm \ref{alg:online-rag}. See Appendix \ref{appx:exp} for the setup.}
\label{fig:regret}
\end{figure}

With an appropriate choice of $\eta$ to tradeoff these two aspects, Algorithm~\ref{alg:online-rag} achieves sublinear regret in $T$; equivalently, the average regret tends to zero and the loss approaches the optimum of $\bm{\Theta}^*$:

\begin{corollary}
\label{cor:sub_regret}
Assume there exist constants $\bar{\Theta}>0$, $\underline{p}\in(0,1)$, and $\bar{q}>0$ such that
$\|\bm{\Theta}_1-\bm{\Theta}^*\|_F^2\le \bar{\Theta}$, $p_{t,i^*_t}\ge \underline{p}$, and $\|\bm{q}_t\|_2^2\le \bar{q}$ for all $t$. Then, with
$\eta=\sqrt{\frac{\underline{p}\,\bar{\Theta}}{\bar{q}\,(1-\underline{p})(1+2\underline{p})\,T}}$, we have
\[
\mathbb{E}\!\left[\mathrm{Reg}^{\text{ORAG}}\!\left(\{(\bm{q}_t,i^*_t)\}_{t=1}^T\right)\right]
\le \sqrt{\frac{\bar{\Theta}\,\bar{q}\,(1-\underline{p})(1+2\underline{p})\,T}{\underline{p}}}
=O(\sqrt{T}).
\]
\end{corollary}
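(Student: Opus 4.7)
The plan is to plug the three uniform bounds ($\|\bm{\Theta}_1-\bm{\Theta}^*\|_F^2\le\bar{\Theta}$, $p_{t,i_t^*}\ge\underline{p}$, and $\|\bm{q}_t\|_2^2\le\bar{q}$) into the regret bound from Theorem~\ref{thm:reg_bound}, and then optimize over $\eta$ by AM-GM. This reduces the problem to a one-variable calculus exercise, so there is no real obstacle; the only care needed is to control the scalar $\frac{1}{p_{t,i_t^*}}-2p_{t,i_t^*}+1$ correctly, since it depends on $p_{t,i_t^*}$ through two terms that move in opposite directions as $p_{t,i_t^*}$ grows.

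First I would handle this scalar. Viewing $f(p)=\frac{1}{p}-2p+1$ as a function on $(0,1]$, we have $f'(p)=-\frac{1}{p^2}-2<0$, so $f$ is strictly decreasing. Consequently, for every $t$, $f(p_{t,i_t^*})\le f(\underline{p})=\frac{1}{\underline{p}}-2\underline{p}+1$, and a direct algebraic rearrangement gives the clean identity
\[
\frac{1}{\underline{p}}-2\underline{p}+1=\frac{(1-\underline{p})(1+2\underline{p})}{\underline{p}},
\]
which matches exactly the factor appearing in the statement.

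Next I would apply Theorem~\ref{thm:reg_bound} and bound the second term by replacing $\|\bm{q}_t\|_2^2$ with $\bar{q}$ and the scalar factor with its uniform upper bound above. This yields
\[
\mathbb{E}\!\left[\mathrm{Reg}^{\text{ORAG}}\right]\le \frac{\bar{\Theta}}{2\eta}+\frac{\eta\,T\,\bar{q}\,(1-\underline{p})(1+2\underline{p})}{2\underline{p}}.
\]
Both the upper bound on the Frobenius distance and the tail bound on the sum have been used here; no further probabilistic argument is required.

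Finally, I would optimize the right-hand side over $\eta>0$. Writing it as $\frac{A}{2\eta}+\frac{B\eta}{2}$ with $A=\bar{\Theta}$ and $B=T\bar{q}(1-\underline{p})(1+2\underline{p})/\underline{p}$, AM-GM gives $\frac{A}{2\eta}+\frac{B\eta}{2}\ge \sqrt{AB}$ with equality at $\eta^\star=\sqrt{A/B}$. Substituting $A$ and $B$ recovers precisely the stepsize
\[
\eta=\sqrt{\frac{\underline{p}\,\bar{\Theta}}{\bar{q}\,(1-\underline{p})(1+2\underline{p})\,T}}
\]
and the corresponding regret bound $\sqrt{\bar{\Theta}\bar{q}(1-\underline{p})(1+2\underline{p})T/\underline{p}}=O(\sqrt{T})$ claimed in the corollary. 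The $O(\sqrt{T})$ rate follows immediately because all problem-dependent constants are independent of $T$.
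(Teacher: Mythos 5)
Your proposal is correct and takes essentially the same route as the paper, which simply states that the corollary follows directly from Theorem~\ref{thm:reg_bound} with the given $\eta$; you supply the omitted details (the monotonicity of $p\mapsto \frac{1}{p}-2p+1$, the identity $\frac{1}{\underline{p}}-2\underline{p}+1=\frac{(1-\underline{p})(1+2\underline{p})}{\underline{p}}$, and the AM--GM optimization over $\eta$), all of which check out.
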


Corollary~\ref{cor:sub_regret} shows that Algorithm~\ref{alg:online-rag} attains $O(\sqrt{T})$ regret relative to the optimal embeddings (knowing all incoming queries in the hindsight). While the choice of $\eta$ above depends on several parameters, in practice (and in our experiments) a small constant with a time-varying schedule, e.g., $\eta_t=c/\sqrt{t}$ with $c=10^{-5}$ (as used in standard online convex optimizations \citep{hazan2016introduction}) can work well across different contexts. Figure \ref{fig:regret} empirically verifies a sublinear cumulative regret for Algorithm \ref{alg:online-rag}. We also draw a connection between the cross-entropy loss (used in the above regret analysis) and the accuracy metric, and we refer to Appendix \ref{appx:discussions}.

\section{Conclusion}
We introduce \emph{Online-Optimized RAG}, a deployment-time framework for tool use and function calling that continually improves retrieval by updating embeddings from live interactions with minimal feedback. Our method casts retrieval as online classification and employs lightweight gradient-style updates that preserve latency and throughput, scale to large catalogs, and integrate seamlessly with existing LLM pipelines without retraining the generator. We provide theoretical guarantees and an analysis linking initial embedding quality to downstream performance, supported by empirical evaluation on real retrieval workloads. We hope this work catalyzes future deployment of self-improving RAG systems.

\bibliographystyle{informs2014}
\bibliography{main}

\newpage
\appendix
\section{Related Work}
\label{appx:literature}
\subsection{Retrieval augmented generation}
Retrieval-augmented generation (RAG) \citep{lewis2020retrieval} augments LLMs with a retriever that supplies passages from an external knowledge base and instructs the model to answer using those passages.  By exposing sources, RAG reduces the risk of hallucinations and improves factuality. We refer to the survey paper \cite{oche2025systematic} for a comprehensive review of RAG.

The first line of work adapts when and how much to retrieve at inference time rather than changing the retriever itself. SELF-RAG \citep{asai2024self} lets the model decide when to retrieve and critique its own outputs, reducing hallucinations over standard RAG, while Adaptive-RAG \citep{jeong2024adaptive} learns a lightweight router that sends easy questions to zero/single-shot retrieval and harder ones to multi-step pipelines, trading accuracy for latency. Building on online signals, AutoRAG-HP \citep{fu2024autorag} frames top-$k$ (and related knobs) as a hierarchical bandit tuned from live feedback; MBA-RAG \citep{tang2024mba} treats whole retrieval policies (none/one-shot/multi-step) as arms; and DynamicRAG \citep{sun2025dynamicrag} optimizes a reranker via reinforcement learning to reorder passages and choose $k$ per query. These methods primarily tune controller hyper-parameters rather than updating the embedding space as we do.

Another line of works swaps or augments the retrieval substrate itself. Parametric RAG \citep{su2025parametric} pre-parameterizes documents as small LoRA adapters so the model retrieves by merging adapters instead of consuming long contexts, while HippoRAG and its follow-up \citep{jimenez2024hipporag,gutierrez2025rag} build an open knowledge graph and use graph walks to achieve multi-hop, context-aware retrieval.

Closer to our aim of aligning the retriever with usage signals, several papers adjust representations at inference or through fine-tuning. RAFT \citep{zhang2024raft} fine-tunes generators to quote the right spans, improving faithfulness under noisy top-$k$. For continual retriever training, \cite{goswami2025query} estimates query-embedding drift for new tasks and compensates it at retrieval time to preserve compatibility with an existing index. ReFIT \citep{reddy2023refit} distills a cross-encoder reranker into the query embedding on the fly and re-retrieves with the updated query vector; FLAIR \citep{zhang2025flair} leverages user/synthetic indicator feedback to re-rank via a two-track scoring scheme; and NUDGE \citep{zeighami2024nudge} fine-tunes document embeddings through offline training and validation datasets with positive feedback. However, these approaches are controller-tuning, offline, or/and require labeled offline datasets. In contrast, our method performs lightweight online gradient updates to the retrieval embeddings from minimal deployment feedback (e.g., solved/unsolved).

\subsection{Tool use and function calling}
Tool use and function calling are now core capabilities of modern LLMs: models can invoke external resources to complete tasks by calling APIs \citep{qin2023toolllm,patil2024gorilla,li2023api}, executing a Python interpreter \citep{gao2023pal}, or orchestrating other AI models \citep{shen2023hugginggpt}. In particular, RAG is commonly employed in the function-call setting for tool-augmented LLMs: given a user query, the system retrieves tool/function specifications and examples from a catalog so the model can select and parameterize the correct call (e.g., \cite{shen2023hugginggpt,liu2025toposem,lumer2024toolshed,alazraki2024meta,xu2024enhancing}).

To strengthen tool use, methods generally combine two phases: offline training and online inference. Offline approaches fine-tune LLMs on curated tool-use corpora \citep{qin2023toolllm,patil2024gorilla,li2023api,hao2023toolkengpt,wang2024toolgen,schick2023toolformer}. Online techniques improve calling performance at inference time by supplying clearer tool descriptions, leveraging the model's reasoning, and incorporating feedback loops \citep{yuan2024easytool,alazraki2024meta,lumer2024toolshed,xu2024enhancing,shen2023hugginggpt}. Within the feedback-driven line, PEToolLLaMA \citep{xu2025petoolllm} personalizes tool learning through supervised fine-tuning and direct preference optimization, while \cite{xu2024enhancing} iteratively refines queries using tool feedback to improve retrieval accuracy at the cost of additional latency. Both frameworks require offline model updates and/or multi-step inference.
By contrast, our approach targets the retrieval layer that underpins function selection: we perform lightweight online gradient updates to the retrieval embeddings using minimal deployment feedback, aligning tool retrieval without fine-tuning the LLM or adding complex controllers. This yields a plug-and-play mechanism for robust, self-improving tool use during deployment.

\section{More Discussions}
\label{appx:discussions}

\subsection{Discussion for cross-entropy loss}

We choose the cross-entropy loss since its convexity in $\bm{\Theta}$ and also its surrogate property for the $0$--$1$ loss \citep{tewari2007consistency,bartlett2006convexity}: optimizing $\bm\Theta$ by minimizing cross-entropy is statistically aligned with maximizing top-1 retrieval accuracy. 

The retrieval task can be cast as multiclass prediction with input $\bm q$ and label $i^*$. The {\em Bayesian} $0$--$1$ risk of a decision rule $g(\bm q)\in\{1,\dots,I\}$ is
\[
\mathcal{R}_{0\text{--}1}(g)=\Pr\!\bigl(g(\bm Q)\neq I^*\bigr)
=\mathbb{E}\!\left[\,\mathbbm{1}\{g(\bm Q)\neq I^*\}\right],
\]
whose Bayes-optimal rule is $g^\star(\bm q)=\arg\max_i \eta_i(\bm q)$, where $\eta_i(\bm q)\coloneqq \Pr(I^*=i\mid \bm Q=\bm q)$ and the above probability is with respect to the randomness of $(\bm Q, \bm I^*)$.
Directly minimizing the $0$--$1$ risk is intractable; a standard approach is to minimize the (population) cross-entropy (CE) risk of a probabilistic predictor $\bm p(\bm q,\bm\Theta)$ with parameter $\bm\Theta$,
\[
\mathcal{R}_{\mathrm{CE}}(\bm\Theta)=\mathbb{E}\!\left[-\log p_{I^*}(\bm Q,\bm\Theta)\right].
\]
The CE loss is a \emph{calibrated surrogate} for the $0$--$1$ loss: its conditional minimizer predicts the true posteriors $\eta(\bm q)$, and any sequence of models whose CE risk approaches its minimum induces decision rules whose $0$--$1$ risk approaches the Bayes risk. Thus, training $\bm\Theta$ by minimizing cross-entropy (with $p_i$ given by the softmax in \eqref{eqn:samp_prob}) is statistically aligned with maximizing top-1 retrieval accuracy as shown in Proposition \ref{prop:surrogate}, which follows the standard analysis of surrogate properties \citep{tewari2007consistency,bartlett2006convexity}.

\medskip
\begin{proposition}
\label{prop:surrogate}
Let $(\bm Q,I^*)$ be distributed according to some unknown law. For any measurable $\bm p(\bm q,\bm\Theta)\in\Delta^{I-1}$ and the induced classifier $g_{\bm\Theta}(\bm q)\coloneqq \arg\max_i p_i(\bm q,\bm\Theta)$, define
\[
\eta_i(\bm q)\coloneqq \Pr(I^*=i\mid \bm Q=\bm q),\qquad
\Delta(\bm q)\coloneqq \eta_{(1)}(\bm q)-\eta_{(2)}(\bm q),
\]
where $\eta_{(1)}\ge \eta_{(2)}\ge\cdots$ are the sorted coordinates of $\eta(\bm q)$.
Then:
\begin{enumerate}
\item[\emph{(i)}] \emph{(Conditional optimality)} For each fixed $\bm q$, the conditional CE risk
\[
\mathcal{L}(\bm p;\eta(\bm q))\coloneqq \mathbb{E}\!\left[-\log p_{I^*}\mid \bm Q=\bm q\right]
= -\sum_{i=1}^I \eta_i(\bm q)\log p_i
\]
is uniquely minimized over $\bm p\in\Delta^{I-1}$ at $\bm p=\eta(\bm q)$.
\item[\emph{(ii)}] \emph{(Excess-risk decomposition)}
\[
\mathcal{R}_{\mathrm{CE}}(\bm\Theta)-\inf_{\bm p}\mathcal{R}_{\mathrm{CE}}
= \mathbb{E}\!\left[\mathrm{KL}\!\bigl(\eta(\bm Q)\,\|\,\bm p(\bm Q,\bm\Theta)\bigr)\right],
\]
where $\inf_{\bm p}\mathcal{R}_{\mathrm{CE}}=\mathbb{E}\bigl[H\!\left(\eta(\bm Q)\right)\bigr]$, with $H$ the Shannon entropy.
\item[\emph{(iii)}] \emph{(Bayes consistency / classification calibration)} Suppose $\Pr\!\bigl(\Delta(\bm Q)=0\bigr)=0$ (no ties almost surely). If a sequence $\bm\Theta_n$ satisfies $\mathcal{R}_{\mathrm{CE}}(\bm\Theta_n)\to \inf_{\bm p}\mathcal{R}_{\mathrm{CE}}$, then
\[
\mathcal{R}_{0\text{--}1}\!\bigl(g_{\bm\Theta_n}\bigr)\;\longrightarrow\; \inf_g \mathcal{R}_{0\text{--}1} \;=\; \mathcal{R}_{0\text{--}1}(g^\star).
\]
\end{enumerate}  
\end{proposition}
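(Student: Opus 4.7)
The plan is to prove the three parts in order, treating (i) and (ii) as preparation and reserving the technical work for (iii).

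For part (i), the natural approach is to use the standard decomposition
\[
\mathcal{L}(\bm p;\eta(\bm q)) = -\sum_{i=1}^I \eta_i(\bm q)\log p_i
= H(\eta(\bm q)) + \mathrm{KL}\!\bigl(\eta(\bm q)\,\|\,\bm p\bigr),
\]
where the first term is independent of $\bm p$. Since Kullback--Leibler divergence is non-negative on the simplex and strictly convex in its second argument when the first argument has full support on its effective support, the minimum over $\bm p\in\Delta^{I-1}$ is attained uniquely at $\bm p=\eta(\bm q)$. (For indices with $\eta_i(\bm q)=0$, the convention $0\log 0 = 0$ handles them automatically.) Part (ii) then follows by taking expectation over $\bm Q$ on both sides of the decomposition, noting that the identity holds pointwise and that $\mathbb{E}[H(\eta(\bm Q))]$ matches $\inf_{\bm p}\mathcal{R}_{\mathrm{CE}}$ by (i) applied inside the expectation.

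The main work is part (iii), which I would prove by combining Pinsker's inequality with a margin argument on $\Delta(\bm q)$. First, observe that from (ii), $\mathcal{R}_{\mathrm{CE}}(\bm\Theta_n)\to\inf_{\bm p}\mathcal{R}_{\mathrm{CE}}$ is equivalent to $\mathbb{E}[\mathrm{KL}(\eta(\bm Q)\,\|\,\bm p(\bm Q,\bm\Theta_n))]\to 0$, so by Markov's inequality the KL divergence vanishes in probability. Pinsker's inequality $\|\eta(\bm q)-\bm p(\bm q,\bm\Theta_n)\|_1 \le \sqrt{2\,\mathrm{KL}(\eta(\bm q)\,\|\,\bm p(\bm q,\bm\Theta_n))}$ then gives $\|\eta(\bm Q)-\bm p(\bm Q,\bm\Theta_n)\|_\infty \to 0$ in probability. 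Next, I would use the pointwise implication that whenever $\|\eta(\bm q)-\bm p(\bm q,\bm\Theta_n)\|_\infty < \Delta(\bm q)/2$, the arg-max coincides and therefore $g_{\bm\Theta_n}(\bm q)=g^\star(\bm q)$.

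To convert this pointwise statement into convergence of $0$--$1$ risk, I would write the excess risk as
\[
\mathcal{R}_{0\text{--}1}(g_{\bm\Theta_n})-\mathcal{R}_{0\text{--}1}(g^\star)
=\mathbb{E}\!\left[\bigl(\eta_{g^\star(\bm Q)}(\bm Q)-\eta_{g_{\bm\Theta_n}(\bm Q)}(\bm Q)\bigr)\,\mathbbm{1}\{g_{\bm\Theta_n}(\bm Q)\neq g^\star(\bm Q)\}\right],
\]
and split on the event $A_\varepsilon=\{\bm q:\Delta(\bm q)\ge\varepsilon\}$. On $A_\varepsilon^c$, the integrand is at most $\varepsilon$, and the no-ties hypothesis $\Pr(\Delta(\bm Q)=0)=0$ together with dominated convergence makes $\Pr(A_\varepsilon^c)\to\Pr(\Delta(\bm Q)=0)=0$ as $\varepsilon\downarrow 0$. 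On $A_\varepsilon$, a mismatch forces $\|\eta(\bm Q)-\bm p(\bm Q,\bm\Theta_n)\|_\infty\ge\varepsilon/2$, whose probability tends to $0$ by the Pinsker step, and the integrand is bounded by $1$. Sending $n\to\infty$ then $\varepsilon\to 0$ yields the claim. The main obstacle I anticipate is the careful handling of the no-ties condition at the margin boundary, which is precisely why the assumption $\Pr(\Delta(\bm Q)=0)=0$ is invoked to rule out a non-vanishing pathological set on which arg-max is discontinuous.
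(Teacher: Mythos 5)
Your parts (i) and (ii) coincide with the paper's proof: the identity $-\sum_i\eta_i\log p_i=H(\eta)+\mathrm{KL}(\eta\|\bm p)$, Gibbs' inequality for uniqueness, and taking expectations. For part (iii) you use the same two ingredients as the paper (Pinsker plus the no-ties hypothesis) but execute the limit differently, and your route is arguably the more careful one. The paper passes from $\mathbb{E}[\mathrm{KL}]\to0$ to almost-sure convergence of the total variation \emph{along a subsequence}, concludes $g_{\bm\Theta_n}(\bm Q)\to g^\star(\bm Q)$ a.s., and invokes bounded convergence; strictly speaking this only yields convergence of the $0$--$1$ risk along that subsequence, and one must add a subsequence-of-subsequences argument to recover the full sequence. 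You instead stay with convergence in probability, write the excess risk explicitly as $\mathbb{E}\bigl[(\eta_{g^\star(\bm Q)}(\bm Q)-\eta_{g_{\bm\Theta_n}(\bm Q)}(\bm Q))\,\mathbbm{1}\{g_{\bm\Theta_n}(\bm Q)\neq g^\star(\bm Q)\}\bigr]$, and split on the margin event $A_\varepsilon=\{\Delta(\bm Q)\ge\varepsilon\}$; sending $n\to\infty$ then $\varepsilon\downarrow0$ gives the full-sequence limit directly. One genuine (but harmless) error: your claim that on $A_\varepsilon^c$ the integrand is at most $\varepsilon$ is false for $I\ge3$, since a mismatch there has gap $\eta_{(1)}-\eta_{g_{\bm\Theta_n}}$, which can be as large as $\eta_{(1)}-\eta_{(I)}$ and is only \emph{bounded below} by $\Delta(\bm q)$; it is controlled by $\Delta$ only in the binary case. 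The proof survives because you also invoke the correct bound --- integrand at most $1$ and $\Pr(A_\varepsilon^c)\to\Pr(\Delta(\bm Q)=0)=0$ by continuity of measure (not dominated convergence, as you write) --- and that alone suffices to close the argument.
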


\begin{proof}
\emph{(i)} For fixed $\eta$, $\mathcal{L}(\bm p;\eta)=-\sum_i \eta_i\log p_i$ is minimized at $\bm p=\eta$ by Gibbs' inequality, since
\[
-\sum_i \eta_i\log p_i \;=\; H(\eta)+\mathrm{KL}(\eta\| \bm p)\;\ge\; H(\eta),
\]
with equality iff $\bm p=\eta$.

\emph{(ii)} Taking expectation over $\bm Q$ in the identity above yields
\[
\mathcal{R}_{\mathrm{CE}}(\bm\Theta)
= \mathbb{E}\!\left[H\!\left(\eta(\bm Q)\right)\right]
+ \mathbb{E}\!\left[\mathrm{KL}\!\bigl(\eta(\bm Q)\,\|\,\bm p(\bm Q,\bm\Theta)\bigr)\right],
\]
and the infimum over all $\bm p$ is attained by $\bm p=\eta$ pointwise.

\emph{(iii)} By (ii), $\mathcal{R}_{\mathrm{CE}}(\bm\Theta_n)\downarrow \inf \mathcal{R}_{\mathrm{CE}}$ implies
$\mathbb{E}\!\left[\mathrm{KL}\!\bigl(\eta(\bm Q)\,\|\,\bm p(\bm Q,\bm\Theta_n)\bigr)\right]\to 0$.
Pinsker's inequality gives, for each $n$,
\[
\|\eta(\bm Q)-\bm p(\bm Q,\bm\Theta_n)\|_{\mathrm{TV}}
\;\le\; \sqrt{\tfrac12\,\mathrm{KL}\!\bigl(\eta(\bm Q)\,\|\,\bm p(\bm Q,\bm\Theta_n)\bigr)},
\]
hence the total variation distance converges to $0$ in $L^1$ and along a subsequence almost surely. Wherever $\Delta(\bm Q)>0$, this forces $\arg\max_i p_i(\bm Q,\bm\Theta_n)=\arg\max_i \eta_i(\bm Q)$ for all large $n$. Therefore $g_{\bm\Theta_n}(\bm Q)\to g^\star(\bm Q)$ almost surely, and by bounded convergence,
$\mathcal{R}_{0\text{--}1}(g_{\bm\Theta_n})\to \mathcal{R}_{0\text{--}1}(g^\star)$.
\end{proof}

\textbf{Remark.} In our formulation, $p_i(\bm q,\bm\Theta)$ is the softmax in \eqref{eqn:samp_prob}, which maps any score vector to a valid probability vector. Minimizing the sample average of $-\log p_{i^*}(\bm q,\bm\Theta)$ is therefore an empirical proxy for minimizing $\mathcal{R}_{\mathrm{CE}}(\bm\Theta)$, and by the proposition it targets the Bayes-optimal top-1 retrieval rule under the $0$--$1$ criterion. We also note that the retrieval probabilities $p_i(\bm q,\bm\Theta)$ are induced directly by the embedding scores and could potentially be improved via calibration \citep{chen2023quantifying,liu2024uncertainty,liu2024can,nikitin2024kernel}, which is an interesting direction for future work.

\subsection{More efficient gradient update}
\label{appx:efficient_grad}
Relative to using fixed embeddings (e.g., $\bm{\Theta}_1$), Algorithm~\ref{alg:online-rag} adds only two per-round operations: gradient computation and embeddings update. For very large tool catalogs, an even more efficient variant updates only the chosen item $\bm{\theta}_{t,i_t}$ each round:
compute the (stochastic) gradient estimate for the sampled item $i_t$,
\[
\bm{g}_{t,i_t}=\left(1-\frac{\mathbbm{1}\{i_t=i^*_t\}}{p_{t,i_t}}\right)\bm{q}_t,
\]
and update
\[
\bm{\theta}_{t+1,i} =
\begin{cases}
\bm{\theta}_{t,i} - \eta\, \bm{g}_{t,i_t}, & \text{if } i = i_t, \\[6pt]
\bm{\theta}_{t,i}, & \text{otherwise.}
\end{cases}
\]
With a similar analysis of Lemma \ref{lem:unbias_grad}, we can show $\bm{g}'_{t,i}=\mathbbm{1}\{i=i_{t}\}\left(1-\frac{\mathbbm{1}\{i_t=i^*_t\}}{p_{t,i_t}}\right)\bm{q}_t$  (which matches the above variant update by noting the indicator $\mathbbm{1}\{i=i_{t}\}$ nulls the unchosen items) is also an unbiased estimator for gradients for all $i$. Because only the sampled item is modified at each iteration, this variant is attractive when the number of items is large.

\subsection{Variants of Algorithm \ref{alg:online-rag}}
\label{appx:variants}
\begin{algorithm}[ht!]
\centering
\caption{ORAG with $K$ retrievals}
\label{alg:topk-rerank}
\begin{algorithmic}[1]
\Require Initial embeddings $\bm{\Theta}_1\in\mathbb{R}^{I\times d}$; learning rate $\eta>0$; beam size $K\in\{1,\ldots,I\}$; reranker $\textsc{Rerank}(\bm{q},\mathcal{I})\!\to\! i\in\mathcal{I}$
\For{$t=1,2,\ldots$}
\State Observe query embedding $\bm{q}_t\in\mathbb{R}^d$.
\State Compute sampling probabilities from current $\bm{\Theta}_t$ via \eqref{eqn:samp_prob}:
\[
p_{t,i}=p_i(\bm{q}_t,\bm{\Theta}_t),\quad i=1,\ldots,I.
\]
\State Sample a set $\mathcal{I}_t$ of size $K$ \emph{without replacement} from $\bm{p}_t=(p_{t,1},\ldots,p_{t,I})$.
\State Obtain final choice $i_t\gets \textsc{Rerank}(\bm{q}_t,\mathcal{I}_t)$ and observe feedback $\mathbbm{1}\{i_t=i_t^*\}$.
\State Compute the (stochastic) gradient estimate $\bm{g}_{t,i}$ for each item $i$:
\begin{equation}
    \bm{g}_{t,i}=\left(p_{t,i}-\frac{\mathbbm{1}\{i=i_t\}\mathbbm{1}\{i_t=i^*_t\}}{p_{t,i_t}}\right)\bm{q}_t.
\end{equation}

\State Update embeddings for $\bm{\Theta}_{t+1}=\left[\bm{\theta}_{t+1,1},\ldots,\bm{\theta}_{t+1,I}\right]^\top$:
\[
\bm{\theta}_{t+1,i} =\bm{\theta}_{t,i} - \eta \cdot \bm{g}_{t,i}.
\]
\EndFor
\end{algorithmic}
\end{algorithm}

\begin{algorithm}[ht!]
\centering
\caption{ORAG with Dynamic Database}
\label{alg:dynamic-db}
\begin{algorithmic}[1]
\Require Initial item set $\mathcal{I}_1$ and embeddings $\bm{\Theta}_1\in\mathbb{R}^{|\mathcal{I}_1|\times d}$; learning rate $\eta>0$; initializer $\textsc{InitEmbed}(i)\in\mathbb{R}^d$ for new items
\For{$t=1,2,\ldots$}
\State Observe current available item set $\mathcal{I}_t$ (additions/removals relative to $\mathcal{I}_{t-1}$).
\State Maintain embeddings for $\bm{\Theta}_t$: 
\begin{itemize}
\item For each $i\in \mathcal{I}_t\setminus \mathcal{I}_{t-1}$ (new item), add a row $\bm{\theta}_{t,i}\gets \textsc{InitEmbed}(i)$ to $\bm{\Theta}_t$.
\item For each $i\in \mathcal{I}_{t-1}\setminus \mathcal{I}_t$ (removed item), delete row $\bm{\theta}_{t-1,i}$ from $\bm{\Theta}_{t-1}$.
\end{itemize}
\State Observe query embedding $\bm{q}_t\in\mathbb{R}^d$.
\State Compute probabilities over available items via \eqref{eqn:samp_prob}:
\[
p_{t,i}=p_i(\bm{q}_t,\bm{\Theta}_t),\quad i\in\mathcal{I}_t.
\]
\State Sample $i_t\sim \bm{p}_t$ and observe $\mathbbm{1}\{i_t=i_t^*\}$.
\State Compute the (stochastic) gradient estimate $\bm{g}_{t,i}$ for each item $i$:
\begin{equation}
    \bm{g}_{t,i}=\left(p_{t,i}-\frac{\mathbbm{1}\{i=i_t\}\mathbbm{1}\{i_t=i^*_t\}}{p_{t,i_t}}\right)\bm{q}_t.
\end{equation}

\State Update embeddings for $\bm{\Theta}_{t+1}=\left[\bm{\theta}_{t+1,1},\ldots,\bm{\theta}_{t+1,I}\right]^\top$:
\[
\bm{\theta}_{t+1,i} =\bm{\theta}_{t,i} - \eta \cdot \bm{g}_{t,i}.
\]
\EndFor
\end{algorithmic}
\end{algorithm}

\begin{algorithm}[ht!]
\centering
\caption{ORAG with Multi-Hop}
\label{alg:multihop}
\begin{algorithmic}[1]
\Require Initial embeddings $\bm{\Theta}_1=\left[\bm{\theta}_{1,1},\ldots,\bm{\theta}_{1,I}\right]^\top\in\mathbb{R}^{I\times d}$; learning rate $\eta>0$
\For{$t=1,2,\ldots$}
\State Observe a sequence of sub-task embeddings $\mathcal{Q}_t=\{\bm{q}^{(h)}_t\}_{h=1}^{H_t}$.
\State Initialize $\bm{\Theta}^{(1)}_t \gets \bm{\Theta}_t$.
\For{$h=1,2,\ldots,H_t$} \Comment{sub-tasks within round $t$}
\State Compute sampling probabilities via \eqref{eqn:samp_prob}:
\begin{equation}
\label{eqn:mh_prob}
    p_{t,h,i}=p_i\!\left(\bm{q}^{(h)}_t,\bm{\Theta}^{(h)}_t\right),\quad i=1,\ldots,I.
\end{equation}
\State Sample an item $i_{t,h}\sim \bm{p}_{t,h}=(p_{t,h,1},\ldots,p_{t,h,I})$ and obtain judge feedback $y_{t,h}\in\{0,1\}$.
\State Compute the (stochastic) gradient estimate $\bm{g}_{t,h,i}$ for each item $i$:
\begin{equation}
\label{eqn:mh_grad}
    \bm{g}_{t,h,i}=\left(p_{t,h,i}-\frac{\mathbbm{1}\{i=i_{t,h}\}\,y_{t,h}}{p_{t,h,i_{t,h}}}\right)\bm{q}^{(h)}_t.
\end{equation}
\State Update embeddings:
\[
\bm{\theta}^{(h+1)}_{t,i} \gets \bm{\theta}^{(h)}_{t,i} - \eta \cdot \bm{g}_{t,h,i}.
\] 
\EndFor
\State Set $\bm{\Theta}_{t+1}\gets \bm{\Theta}^{(H_t+1)}_t$.
\EndFor
\end{algorithmic}
\end{algorithm}

\newpage
\section{Experiment Details}
\label{appx:exp}

\subsection{Dataset and prompt details}
\label{app:datasets}

This part provides details on the construction of queries and document/tool representations for each benchmark. To ensure reproducibility, we outline the exact data fields and templates used to generate the text for embedding.

\subsubsection{UltraTool}

Following prior work~\citep{braunschweiler2025toolreagt}, we decompose each annotated plan step into a standalone retrieval query. For each sample, we construct the query from the top-level question (\texttt{question} column) and the specific plan step (\texttt{step} column) using the following template:
\begin{templatebox}
Given the following task:"\{question\}", select the best tool provided in the context to solve the following substep:"\{step\}".
\end{templatebox}
The resulting text is used as the input for the query embedding model.

For each tool, we create a single text representation (stored as \texttt{text\_representation} column) by concatenating the following fields in order. Fields that are empty are omitted.
\begin{itemize}
    \item \textbf{Name:} \texttt{name}
    \item \textbf{Description:} \texttt{description}
    \item \textbf{Arguments:} \texttt{arguments} (parsed as a JSON string)
    \item \textbf{Results:} \texttt{results} (parsed as a JSON string)
\end{itemize}
This concatenated string is used to embed the tool documentation.

\subsubsection{ToolRet}

We use all 35 sub-tasks from the ToolRet benchmark~\citep{shi2025retrieval}. Following the original paper, we use an instruction-based format for queries, concatenating the provided \texttt{instruction} and the user \texttt{query}:
\begin{templatebox}
\{instruction\}\textbackslash n\{query\}
\end{templatebox}
For tool documentation, we perform a schema-aware extraction from the raw JSON object. We extract and join the following fields with newlines:
\begin{itemize}
    \item \textbf{\texttt{ToolRet-Code}:} \texttt{name}, \texttt{description}, \texttt{func\_description}, \texttt{functionality}
    \item \textbf{\texttt{ToolRet-Web/Customized}:} \texttt{name}, \texttt{description}
\end{itemize}
If a field is not present or the documentation is not a valid JSON object, we fall back to using the raw documentation string for embedding.

\subsubsection{FiQA}

For the FiQA benchmark~\citep{thakur2021beir}, we follow the standard setup from the MTEB toolkit~\citep{muennighoff2022mteb}. For both corpus and queries, we embed the content of the \texttt{text} field. 

\subsubsection{MultiHopRAG}

For each original \texttt{query}, we generate a sequence of sub-queries (\texttt{decomposed\_questions}) using an LLM-based query decomposition strategy. We use the following prompt for this task:
\begin{promptbox}
You are an expert research analyst specializing in breaking down complex questions into a logical sequence of simple, answerable sub-questions.

Your task is to decompose a given 'Original Question' into a series of smaller, ordered sub-questions. This decomposition will be used to query a retrieval system containing various factual reports.

\textbf{Your Goal:} Create a step-by-step reasoning path. The answer to a later sub-question should ideally depend on or build upon the answer to a previous one, creating a logical chain.

\textbf{Key Constraints:}
\begin{enumerate}
    \item \textbf{Logical Flow:} The sub-questions must follow a logical order. The sequence should represent the steps a human researcher would take to find the final answer.
    \item \textbf{Self-Contained:} Each sub-question must be understandable and answerable on its own.
    \item \textbf{Fact-Focused:} All sub-questions must be aimed at retrieving factual information from the reports. Do not ask about the publication source or publisher unless it is essential for resolving ambiguity.
    \item \textbf{Completeness:} The combined answers to your sub-questions should contain all the information necessary to answer the Original Question.
    \item \textbf{No Direct Answers:} Do not try to answer the Original Question yourself. Only generate the sub-questions.
\end{enumerate}
\end{promptbox}
We then form a compact retrieval string (\texttt{formatted\_query}) for each sub-question using the template:
\begin{templatebox}
Context: \{original\_query\} | Focus: \{sub\_question\}
\end{templatebox}
For the document corpus, we create a standardized text representation by sorting all key-value pairs of a document's JSON object and joining them into a single string with the format \texttt{\{key\}:\{value\}} on each line. This approach ensures a consistent representation that includes all available information (e.g., category, title, body).

\subsubsection{Common Preprocessing and Embedding Details}

Before embedding, we apply light text normalization to all inputs, including stripping whitespace and replacing newlines for API stability. If an input exceeds the model's length limit, we progressively truncate it (e.g., to 8192 characters and then shorter) and skip any samples that remain too long. The output dimension for all embedding models is set to 1536. 

\subsection{Trainer}

Our algorithm is implemented using PyTorch. We employ the AdamW optimizer with default parameters and a learning rate schedule that decays proportionally to $1/\sqrt{t}$, where $t$ is the training/update step. Key hyperparameters, including the initial learning rate and batch size, were tuned via a Bayesian-optimization-based grid search. The search space for each hyperparameter is detailed below:
\begin{itemize}
    \item \textbf{Initial Learning Rate ($\eta_0$):} $\{1\text{e-}8, 2\text{e-}8, 5\text{e-}8, 1\text{e-}7, 2\text{e-}7, 5\text{e-}7, 1\text{e-}6, 2\text{e-}6, 5\text{e-}6, 1\text{e-}5\}$
    \item \textbf{Batch Size:} $\{5, 10, 20, 30, 40, 50\}$
\end{itemize}

\subsection{Data enhancement}
\label{sec:data_enhance}

The main results presented in Section~\ref{sec:baseline_results} utilize a data augmentation strategy where each query is processed multiple times to accelerate convergence. We refer to this as the multiple exposure setting.

For a more realistic online deployment scenario, we also evaluate a single exposure setting where each query is seen only once. Table~\ref{tab:results_k10_0shot} presents the results for this setting. For this experiment, we used the same hyperparameters tuned for the multiple exposure setting. We note that performance could likely be further improved by re-tuning the hyperparameters specifically for the single exposure scenario.

\begin{table*}[!ht]
\centering
\footnotesize
\setlength{\tabcolsep}{3pt} 
\caption{Retrieval performance in the single exposure setting (no data augmentation). Our method still consistently improves over the base dense retrieval models, albeit with smaller margins than in the multiple exposure setting reported in the main paper.}
\label{tab:results_k10_0shot}
\begin{tabular}{lcccccccccc}
\toprule
\multirow{2}{*}{\textbf{Method}} & \multicolumn{2}{c}{\textbf{U-Tool}} & \multicolumn{2}{c}{\textbf{FiQA}} & \multicolumn{2}{c}{\textbf{T-Web}} & \multicolumn{2}{c}{\textbf{T-Code}} & \multicolumn{2}{c}{\textbf{T-Custom}} \\
\cmidrule(lr){2-3} \cmidrule(lr){4-5} \cmidrule(lr){6-7} \cmidrule(lr){8-9} \cmidrule(lr){10-11}
& R@10 & N@10 & R@10 & N@10 & R@10 & N@10 & R@10 & N@10 & R@10 & N@10 \\
\midrule
text-embedding-v4 & 0.7451 & 0.5064 & 0.5335 & 0.4604 & 0.2701 & 0.1453 & 0.5291 & 0.3770 & 0.5066 & 0.4097 \\
text-embedding-3-large & 0.8356 & 0.6067 & 0.6258 & 0.5462 & 0.3243 & 0.1675 & 0.5347 & 0.3582 & 0.6378 & 0.5204 \\
\midrule
\textbf{Ours (text-emb.-v4)} & \perfimpr{0.7614}{1.63} & \perfimpr{0.5209}{1.45} & \perfimpr{0.5382}{0.47} & \perfimpr{0.4646}{0.42} & \perfimpr{0.2900}{1.99} & \perfimpr{0.1579}{1.26} & \perfimpr{0.5484}{1.93} & \perfimpr{0.3891}{1.21} & \perfimpr{0.5202}{1.36} & \perfimpr{0.4237}{1.40} \\
\textbf{Ours (text-emb.-3-L.)} & \perfimpr{0.8540}{1.86} & \perfimpr{0.6180}{1.13} & \perfimpr{0.6284}{0.26} & \perfimpr{0.5483}{0.21} & \perfimpr{0.3458}{2.15} & \perfimpr{0.1804}{1.29} & \perfimpr{0.5430}{0.83} & \perfimpr{0.3671}{0.89} & \perfimpr{0.6561}{1.83} & \perfimpr{0.5324}{1.20} \\
\bottomrule
\end{tabular}
\end{table*}

\subsection{More details on the experiments}

We offer more details of the experiments included in our main paper here.

\textbf{Illustration experiments.} The experiment illustrated in Figure \ref{fig:motivation_1} is conducted on a subset of the \textit{ToolRet-Code} dataset. We only use the tool items whose \texttt{documentation} column contains a \texttt{functionality} key, where the corresponding value is a highly compact and ambiguous description of the tool item. We refer to the full documentation as a ``good documentation'', and the only \texttt{functionality} description as a ``bad documentation''. The visualization in Figure \ref{fig:motivation_2} is extracted from an experiment run on \textit{UltraTool} dataset in Table \ref{tab:results_k10_adjusted}. We randomly sample 100 tool items and inspect their embeddings and performance metrics across different settings.

\textbf{Variant experiments.} All the variant experiments are conducted without the data enhancement techniques mentioned in Section \ref{sec:data_enhance} to evaluate the practical performance under an online setting. Also, considering the high costs of LLM-involved experiments, we did not fully tune the parameters during experiments integrated with the LLM reranker, and only ran 1 round of them.

\textbf{Varying database experiment. } For the varying database experiment shown in Figure \ref{fig:varying_db_exp_rslt}, we provide half of the tools in the beginning, and only add the other half as available tools when half of the queries are processed. The queries are not manipulated.

\textbf{Multi-Hop retrieval experiment. } Figure~\ref{fig:workflow_extended} provides a visual depiction of the combined online-optimizing and inference workflow for the multi-hop retrieval experiments shown in Figure \ref{fig:multihop_exp}. The multi-hop pipeline uses an LLM agent (backed by \texttt{gpt-4o-mini-2024-07-18}) for two key steps: (1) reranking retrieved documents for each sub-task query, and (2) synthesizing a final answer from the collected evidence, with the support of OpenAI's JSON mode. The prompts for these steps are provided below.
\begin{promptbox}
You are an impartial and meticulous AI judge. Your task is to determine which of the provided documents contains useful information to answer the given question, especially the ``Focus'' one.

Carefully review each document and respond with a JSON object containing the 0-based index of the relevant document. Smaller index is more relevant. 

Question:

\{question\}

Retrieved Documents:

\{formatted\_docs\}

Based on the question, which document is the most relevant?
\end{promptbox}

\begin{promptbox}
You are a concise QA assistant. Given a main question and evidence documents, provide the final short answer only. If uncertain, provide your best effort.

Main Question:

\{question\}

Evidence Documents:

\{formatted\_joined\_docs\}

Provide the final answer only with no explanation.
\end{promptbox}

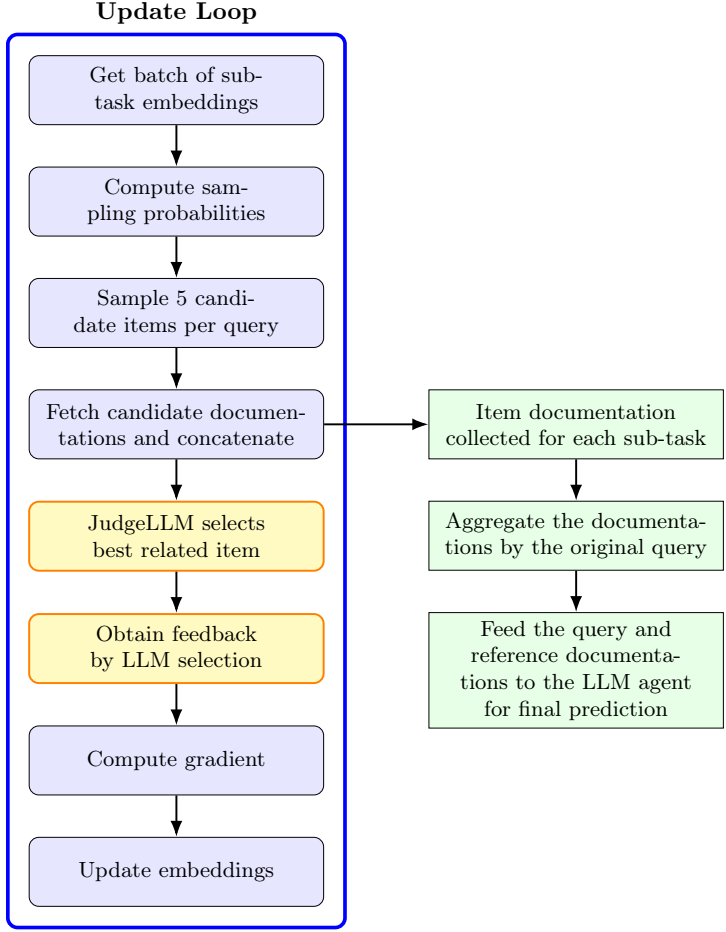
\begin{figure}[h!]
\centering
\resizebox{0.6\columnwidth}{!}{%
\begin{tikzpicture}[
    node distance=0.6cm,
    block/.style={
        rectangle, 
        draw, 
        text width=4cm, 
        text centered, 
        rounded corners, 
        minimum height=1.0cm, 
        fill=blue!10,
        font=\small 
    },
    llm_block/.style={
        block, 
        fill=yellow!30, 
        draw=orange, 
        thick
    },
    process_block/.style={ 
        rectangle,
        draw,
        text width=4cm, 
        text centered,
        minimum height=1.0cm,
        fill=green!10, 
        font=\small
    },
    arrow/.style={-Latex, thick},
    box_border/.style={ 
        draw=blue,
        line width=1.5pt,
        rounded corners,
        inner sep=0.3cm 
    }
]
    \node[block] (contexts) {Get batch of sub-task embeddings};
    \node[block, below=of contexts] (policy) {Compute sampling probabilities};
    \node[block, below=of policy] (sample) {Sample $5$ candidate items per query};
    \node[block, below=of sample] (fetch) {Fetch candidate documentations and concatenate};
    \node[llm_block, below=of fetch] (judge) {JudgeLLM selects best related item};
    \node[llm_block, below=of judge] (reward) {Obtain feedback by LLM selection};
    \node[block, below=of reward] (gradient) {Compute gradient};
    \node[block, below=of gradient] (update) {Update embeddings};

    \draw [arrow] (contexts) -- (policy);
    \draw [arrow] (policy) -- (sample);
    \draw [arrow] (sample) -- (fetch);
    \draw [arrow] (fetch) -- (judge);
    \draw [arrow] (judge) -- (reward);
    \draw [arrow] (reward) -- (gradient);
    \draw [arrow] (gradient) -- (update);

    \node[fit=(contexts) (update), box_border, label={[font=\normalsize]above:\textbf{Update Loop}}] (TL_border) {};

    \node[process_block, right=1.5cm of fetch.east, anchor=west] (doc_collected) {Item documentation collected for each sub-task};
    \node[process_block, below=of doc_collected] (concatenate) {Aggregate the documentations by the original query};
    \node[process_block, below=of concatenate] (final_prediction) {Feed the query and reference documentations to the LLM agent for final prediction};

    \draw [arrow] (fetch.east) -- node[above, font=\tiny] {} (doc_collected.west); 

    \draw [arrow] (doc_collected) -- (concatenate);
    \draw [arrow] (concatenate) -- (final_prediction);
    

\end{tikzpicture}
}
\caption{Workflow for the multi-hop experiment. The left panel shows the update loop for our Algorithm \ref{alg:multihop}, which leverages the decomposed sub-task query embeddings and optimizes the document embedding. The right panel illustrates the inference process where, for each sub-query, retrieved documents are collected and then synthesized by an LLM agent to produce the final answer.}
\label{fig:workflow_extended}
\end{figure}

\textbf{Regret analysis experiment. } We perform our regret analysis on the \textit{ToolRet-Code} dataset, where the result is displayed in Figure \ref{fig:regret}. To evaluate performance across different time horizons ($T$), we truncate the query set to various lengths while keeping all other hyperparameters identical to those used for the results in Table~\ref{tab:results_k10_adjusted}. Regret is calculated as the difference between the cumulative loss of our online Algorithm \ref{alg:online-rag} and an oracle baseline trained with full-information gradients (see Section \ref{sec:online_RAG}), with the cross-entropy loss being the loss function.

\section{Appendix for Proofs}
\label{appx:proofs}
\subsection{Proof for Lemma \ref{lem:unbias_grad}}
\begin{proof}
The (full-information with $i^*_t$ observable) gradient of $l(\bm{\Theta};(\bm{q}_t,i^*_t))$ with respect to $\bm{\theta}_{i}$ is
\[
\frac{\partial l\!\left(\bm{\Theta};(\bm{q}_t,i^*_t)\right)}{\partial \bm{\theta}_i}\Bigg\vert_{\bm{\Theta}=\bm{\Theta}_t}=\bigl(p_{t,i}-\mathbbm{1}\{i=i^*_t\}\bigr)\bm{q}_t,
\]
where $p_{t,i}$ is defined in~\eqref{eqn: prob_abb}. 

By the definitions, for any $i$ we have 
\begin{align*}
\mathbb{E}_{i_t}\left[\bm{g}_{t,i}\right]&=\sum_{i_t=1}^I p_{t,i_t} \cdot\left(p_{t,i}-\mathbbm{1}\{i=i_t\}\frac{\mathbbm{1}\{i_t=i^*_t\}}{p_{t,i_t}}\right)\bm{q}_t\\
&=p_{t,i} \cdot\left(1-\frac{\mathbbm{1}\{i=i^*_t\}}{p_{t,i}}\right)\bm{q}_t\\
&=\left(p_{t,i}-\mathbbm{1}\{i=i^*_t\}\right)\bm{q}_t
\end{align*}
\end{proof}

\subsection{Proof of Theorem \ref{thm:reg_bound}}
\begin{proof}
The proof follows a standard regret analysis for online convex optimization. Let
$\langle\cdot,\cdot\rangle_F$ denote the Frobenius inner product and $\|\cdot\|_F$ the Frobenius norm.
Define the gradient (estimator) matrices
$\bm{G}_t=[\bm{g}_{t,1},\ldots,\bm{g}_{t,I}]^\top$ and
$\tilde{\bm{G}}_t=[\tilde{\bm{g}}_{t,1},\ldots,\tilde{\bm{g}}_{t,I}]^\top$,
where $\bm{g}_{t,i}$  is as in Section~\ref{sec:online_RAG} and 
$$\tilde{\bm{g}}_{t,i} =\frac{\partial l\!\left(\bm{\Theta};(\bm{q}_t,i^*_t)\right)}{\partial \bm{\theta}_i}\Bigg\vert_{\bm{\Theta}=\bm{\Theta}_t}$$
is the (full-information with $i^*_t$ observable) gradient of $l(\bm{\Theta};(\bm{q}_t,i^*_t))$ with respect to $\bm{\theta}_{i}$.

By the update in Algorithm~\ref{alg:online-rag}, for each $t$,
\begin{align*}
\|\bm{\Theta}_{t+1}-\bm{\Theta}^*\|_F^2
&=\|\bm{\Theta}_t-\eta \bm{G}_t-\bm{\Theta}^*\|_F^2 \\
&=\|\bm{\Theta}_t-\bm{\Theta}^*\|_F^2
   -2\eta\,\langle \bm{G}_t,\bm{\Theta}_t-\bm{\Theta}^*\rangle_F
   +\eta^2\|\bm{G}_t\|_F^2.
\end{align*}
Summing and rearranging yields
\begin{align*}
\sum_{t=1}^T \langle \bm{G}_t,\bm{\Theta}_t-\bm{\Theta}^*\rangle_F
&=\sum_{t=1}^T \frac{\|\bm{\Theta}_t-\bm{\Theta}^*\|_F^2-\|\bm{\Theta}_{t+1}-\bm{\Theta}^*\|_F^2}{2\eta}
   +\frac{\eta}{2}\sum_{t=1}^T \|\bm{G}_t\|_F^2 \\
&=\frac{\|\bm{\Theta}_1-\bm{\Theta}^*\|_F^2-\|\bm{\Theta}_{T+1}-\bm{\Theta}^*\|_F^2}{2\eta}
   +\frac{\eta}{2}\sum_{t=1}^T \|\bm{G}_t\|_F^2 \\
&\le \frac{\|\bm{\Theta}_1-\bm{\Theta}^*\|_F^2}{2\eta}
   +\frac{\eta}{2}\sum_{t=1}^T \|\bm{G}_t\|_F^2.
\end{align*}

Because $l(\bm{\Theta};(\bm{q}_t,i_t^*))$ is convex in $\bm{\Theta}$, for any $t$ (conditioning on
$\bm{q}_t,i_t^*,\bm{\Theta}_t$),
\[
l(\bm{\Theta}_t;(\bm{q}_t,i_t^*))-l(\bm{\Theta}^*;(\bm{q}_t,i_t^*))
\le \langle \tilde{\bm{G}}_t,\bm{\Theta}_t-\bm{\Theta}^*\rangle_F
= \big\langle \mathbb{E}[\bm{G}_t],\,\bm{\Theta}_t-\bm{\Theta}^*\big\rangle_F,
\]
where the equality uses the unbiasedness $\mathbb{E}[\bm{G}_t]=\tilde{\bm{G}}_t$ from Lemma \ref{lem:unbias_grad}
(the expectation is over $i_t\sim\bm{p}_t$ given the history).
Summing over $t$ and applying the previous bound gives
\[
\sum_{t=1}^T\mathbb{E}\Big[l(\bm{\Theta}_t;(\bm{q}_t,i_t^*))-l(\bm{\Theta}^*;(\bm{q}_t,i_t^*))\Big]
\le \frac{\|\bm{\Theta}_1-\bm{\Theta}^*\|_F^2}{2\eta}
   +\frac{\eta}{2}\sum_{t=1}^T \mathbb{E}\big[\|\bm{G}_t\|_F^2\big].
\]

It remains to bound $\mathbb{E}\big[\|\bm{G}_t\|_F^2\big]$.
By definition, $\|\bm{G}_t\|_F^2=\sum_{i=1}^I \|\bm{g}_{t,i}\|_2^2$, and
\begin{align*}
\mathbb{E}\big[\|\bm{G}_t\|_F^2\big]
&=\sum_{i=1}^I \mathbb{E}\big[\|\bm{g}_{t,i}\|_2^2\big] \\
&=\|\bm{q}_t\|_2^2 \sum_{i=1}^I \mathbb{E}_{i_t}\!\left[
\left(p_{t,i}-\frac{\mathbbm{1}\{i=i_t\}\mathbbm{1}\{i_t=i_t^*\}}{p_{t,i_t}}\right)^2\right] \\
&=\|\bm{q}_t\|_2^2 \sum_{i=1}^I p_{t,i}
\left(p_{t,i}^2-2\mathbbm{1}\{i=i_t^*\}+\frac{\mathbbm{1}\{i=i_t^*\}}{p_{t,i}^2}\right) \\
&\leq \Big(\frac{1}{p_{t,i_t^*}}-2p_{t,i_t^*}+1\Big)\|\bm{q}_t\|_2^2.
\end{align*}
Plugging this into the previous inequality establishes the claimed bound.
\end{proof}
\subsection{Proof of Corollary \ref{cor:sub_regret}}
\begin{proof}
Under the assumptions given with
\[
\eta=\sqrt{\frac{\underline{p}\,\bar{\Theta}}{\bar{q}\,(1-\underline{p})(1+2\underline{p})\,T}},
\]
the corollary is a direct result of Theorem \ref{thm:reg_bound}.
\end{proof}

\end{document}